\documentclass[journal,onecolumn,12pt,draftclsnofoot]{IEEEtran}

\usepackage{graphicx}
\usepackage{subfigure}
\usepackage{amstext}
\usepackage{amsthm,amsmath,amssymb}
\usepackage{bm} 
\usepackage{amsfonts}
\usepackage{microtype}

\usepackage{mathptmx}
\usepackage{amsthm,amsmath,amssymb,amsfonts}
\usepackage{bm} 
\usepackage{subeqnarray}
\usepackage{cases,cite}
\usepackage[utf8]{inputenc}
\usepackage{float}
\usepackage{graphicx} 
\usepackage{subfigure}
\usepackage{url}
\usepackage{multirow}
\usepackage{microtype}
\usepackage{booktabs}

\newtheorem{proposition}{Proposition}
\newtheorem{theorem}{Theorem}
\newtheorem{lemma}{Lemma}
\newtheorem{remark}{Remark}
\newtheorem{example}{Example}

\IEEEoverridecommandlockouts

\begin{document}

\title{Optimal Cyber-Insurance Contract Design for Dynamic Risk Management and Mitigation}

\author{Rui~Zhang, Quanyan~Zhu

\thanks{R. Zhang and Q. Zhu are with the Department of Electrical and Computer Engineering, New York University, Brooklyn, NY, 11201
E-mail:\{rz885,qz494\}@nyu.edu. }}

\maketitle

\begin{abstract}
With the recent growing number of cyberattacks and the constant lack of effective defense methods, cyber risks become ubiquitous in enterprise networks, manufacturing plants, and government computer systems. Cyber-insurance provides a valuable approach to transfer the cyber risks to insurance companies and further improve the security status of the insured. The designation of effective cyber-insurance contracts requires the considerations from both the insurance market and the dynamic properties of the cyber risks. To capture the interactions between the users and the insurers, we present a dynamic moral-hazard type of principal-agent model incorporated with Markov decision processes, which are used to capture the dynamics and correlations of the cyber risks as well as the user's decisions on the protections. We study and fully analyze a case with a two-state two-action user under linear coverage insurance, and we further show the risk compensation, Peltzman effect, linear insurance contract principle, and zero-operating profit principle in this case. Numerical experiments are provided to verify our conclusions and further extend to cases of a four-state three-action user under linear coverage insurance and threshold coverage insurance.
\end{abstract}

\begin{IEEEkeywords}
Cyber-Insurance, Markov Decision Processes, Principal-Agent Problem, Moral Hazard, Information Asymmetry, Mechanism Design
\end{IEEEkeywords}

\section{Introduction}
Cyber risks created by malicious attackers such as ransomware \cite{o2012ransomware}, data breaches \cite{romanosky2014empirical}, and denial-of-service \cite{apiecionek2014protection}, have become severe threats to the security of important devices and private data in Internet of things (IoT) and cyber-physical systems (CPS) \cite{manshaei2013game}. For example, the CryptoLocker ransomware attack has caused an estimated loss of $\$$3 million \cite{kelion2013cryptolocker}. The 2016 Dyn cyberattack has resulted in the disruption of major Internet platforms and services to large swathes of users in Europe and North America \cite{hilton2016dyn}.

Although various defense methods such as firewalls \cite{cheswick2003firewalls}, intrusion detection systems \cite{rowland2002intrusion}, and moving-target defenses \cite{jajodia2012moving}, have been deployed to detect the intrusion attempts and protect the networked devices, they cannot eliminate the cyber risks due to the complexities of cyber-environments \cite{kumar2006managing}.  Moreover, cyber threats are becoming stealthier, more strategic and purposeful as exemplified by the advanced persistent threats such as Stuxnet attacks on Iranian nuclear power plant in 2009 and the Ukrainian power plant attack in 2015 \cite{farwell2011stuxnet,beelitz2012using}. 

Recently emerged cyber-insurance provides an economically viable solution to further mitigate the cyber risks and improve network resiliency  \cite{kesan2005cyberinsurance,bohme2010modeling,shetty2010competitive,pal2014will,zhang2017bi}. The insured network users could quickly recover from severe cyber-incidents since part of the losses have been covered by the insurers. However, like the classic insurance, the insurers may suffer from offering coverage to reckless users due to the information asymmetry that the insurers cannot directly observe the users' protections \cite{rothschild1978equilibrium,holmstrom1979moral,holmstrom1982moral}. 

Moreover, as suggested by the theory of risk compensation in traditional insurance scenarios  \cite{ewold1991insurance}, the users may become less careful against cyberattacks knowing that insurers will cover their losses, for example, users may click more phishing emails, ignore the warnings of upgrading firewalls or systems, and reduce the frequency of scanning viruses or worms. As a result, the users may encounter more severe cyber-incidents and the insurers may bear extra cyber risks.    
 
Thus, it is imperative to study cyber-insurance contracts and its impacts on the users' cyber-risk statuses. However, classic risk analysis and insurance frameworks cannot be directly applied to cyber risks and cyber-insurance as cyber risks are dynamically evolving and strongly correlated \cite{xu2014cybersecurity,fava2007terrain,cheung2003modeling,bohme2006models}. For example, an adversary can first launch a node capture attack to compromise the system  \cite{tague2008modeling,tague2009mitigation}, and then gain the administration to the devices  \cite{jones2000computer}, steal private information  \cite{romanosky2011data}, or inject Ransomware worms or viruses  \cite{gazet2010comparative}.

In this paper, we capture the correlations and dynamics of the cyber risks as well as the users' decisions on the protections with the Markov decision processes (MDP)  \cite{puterman2014markov,roy2010survey}. Different states of the MDP are used to capture the different cyber risks from various sources, such as service failures, attackers, or network connections. The transitions of states capture the connections of different cyber risks, and they are affected by the user's actions of protections at different times. 

To further mitigate the cyber risks, the user has a choice of purchasing cyber-insurance. After paying a premium, the user could receive financial coverages from the insurer to reimburse his losses caused by various cyber risks as shown in Fig. \ref{fig:OverviewUserInsurer}. The objective of the user is to find an optimal deployment of protections and cyber-insurance that minimizes his cyber-losses. 

\begin{figure}[http]
\centering
\includegraphics[width=0.9\textwidth]{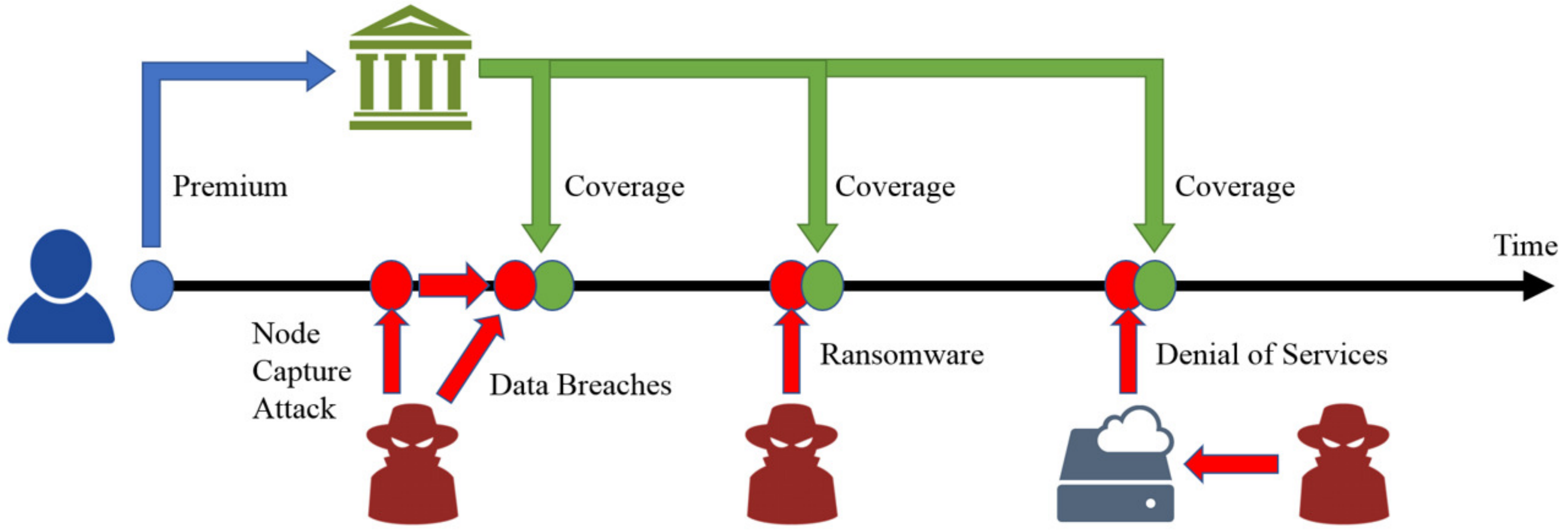}
\caption{Cyber-insurance example. The blue, red, and green icons represent user, attacker, and insurer, respectively. A user pays a premium to an insurer to purchase the cyber-insurance. Then, the user could receive financial coverages from the insurer to cover part of his losses caused by cyberattacks.}
\label{fig:OverviewUserInsurer}
\end{figure}

A rational user selects a cyber-insurance from which he could benefit more, i.e., contracts with a low premium and a high coverage. However, an insurer tends to offer an insurance contract that has a high premium and a low coverage, as the insurer aims to maximize his operating profit. Moreover, similar to the traditional insurance scenarios, the insurer is not aware of the local protections of the user, and an inappropriate insurance contract could largely damage the insurer's profit. 

We address such conflicting interests and the information asymmetry between the user and the insurer with a moral hazard type of principal-agent problem  \cite{holmstrom1979moral,holmstrom1982moral,grossman1983analysis,laffont2009theory}. The analysis, as well as the solution of the problem, is important to study the impacts of cyber-insurance to the user and design effective insurance contracts. The major contributions of this work are summarized as follows:

\begin{itemize}
\item We integrate Markov Decision Processes (MDP) into a moral hazard type of principal-agent model to  investigate the impacts of cyber-insurance on the user's cyber risks and design effective cyber-insurance contracts for the insurer. 
\item We fully characterize a case between a two-state two-action user and a linear coverage insurer. The results of this case indicate that the optimal insurance contracts follow linear insurance contract principle and zero-operating profit principle. The analysis also demonstrates the existence of risk compensation and Peltzman effect in cyber-insurance. 
\item We develop computational tools to solve problems involving multiple cyber-risk states, various protection choices, and complex insurance contracts. Our numerical experiments illustrate risk compensation, Peltzman effect, and zero-operating profit principle in cases of a four-state three-action user under linear coverage insurance and threshold coverage insurance.
\end{itemize}

\subsection{Organization of the Paper}
The rest of this paper is organized as follows. Section \ref{sec:RelatedWorks} presents the related works. Section \ref{sec:User} and Section \ref{sec:Insurer} discuss the user's problem and the insurer's problem, respectively. Section \ref{sec:Case} presents a case study of a linear coverage insurance contract on a two-state two-action user. Section \ref{sec:Num} and Section \ref{sec:Con} present numerical results and concluding remarks, respectively. Appendices A, B, and C provide the proofs of the Proposition \ref{pro:UserH}, and Theorem \ref{the:UserUniqueness}, and Proposition \ref{pro:UserSwitchingCoverageLevel}, respectively. We provide a summary of notations in the following table for convenience.

\begin{table}[http]
\small
\renewcommand{\arraystretch}{1}
\begin{center}
\begin{tabular}{cc}
\hline
\multicolumn{2}{c}{Summary of Notations}    
\\ \hline  \multicolumn{1}{c|}{$t$} & Time $t$
\\ \multicolumn{1}{c|}{$\mathcal{S}$, $N$} & Set and Number of All Possible States
\\ \multicolumn{1}{c|}{$S_n$} & State $n$ ($1\leq n \leq N$)
\\ \multicolumn{1}{c|}{$s$, $s_t$} & State, State at Time $t$ ($s, s_t\in\mathcal{S}$)
\\ \multicolumn{1}{c|}{$\mathcal{X}$} & Set of Direct Losses at All Possible States
\\ \multicolumn{1}{c|}{$X_n$} & Direct Loss at State $S_n$
\\ \multicolumn{1}{c|}{$x$, $x_t$} & Direct Loss, Direct Loss at Time $t$
\\ \multicolumn{1}{c|}{$p(s_t, s_{t+1})$} & Transition Probability from $s_t$ to $s_{t+1}$
\\ \multicolumn{1}{c|}{$\mathcal{A}$, $M$} & Set and Number of All Possible Protections
\\ \multicolumn{1}{c|}{$A_m$} & Protection $m$ ($1\leq m \leq M$)
\\ \multicolumn{1}{c|}{$a$, $a_t$} & Protection, Protection at Time $t$ ($a, a_t\in\mathcal{A}$) 
\\ \multicolumn{1}{c|}{$p(s_t, a_t, s_{t+1})$} & Transition Probability from $s_t$ to $s_{t+1}$ under $a_t$
\\ \multicolumn{1}{c|}{$c(a)$} & Cost Function
\\ \multicolumn{1}{c|}{$\alpha_s$} & Stationary State Protection at State $s$ ($\alpha_s \in \mathcal{A}$)
\\ \multicolumn{1}{c|}{$\Omega$} & Set of All Possible Stationary Protection Policies
\\ \multicolumn{1}{c|}{$\pi$} & Stationary Protection Policy ($\pi(s) = \alpha_s,\forall s\in\mathcal{S}$)
\\ \multicolumn{1}{c|}{$\rho$} & Value of Transition Probabilities
\\ \multicolumn{1}{c|}{$\mathcal{R}$} & Set of All Possible Coverage Functions
\\ \multicolumn{1}{c|}{$r(x)$, $K$} & Coverage Function ($r\in\mathcal{R}$), Premium ($K\in\mathbb{R}_{\geq 0}$)
\\ \multicolumn{1}{c|}{$r_0(x)$} & Zero Coverage Function ($r_0(x)=0,\forall x\in \mathbb{R}_{\geq 0}$)
\\ \multicolumn{1}{c|}{$R$} & Coverage Level ($r(x)=Rx$)
\\ \multicolumn{1}{c|}{$l(s,a,r)$} & Effective Loss Function
\\ \multicolumn{1}{c|}{$V(s,\pi,r)$} & Expected Cumulative Effective Loss Function
\\ \multicolumn{1}{c|}{$\pi_r^*$} & Optimal Stationary Protection Policy Under Coverage $r$
\\ \multicolumn{1}{c|}{$\mathcal{S}_{GB}$} & Set of Two States ($\mathcal{S}_{GB}=\{S_G, S_B\}$)
\\ \multicolumn{1}{c|}{$S_G$, $S_B$} & Good State, Bad State
\\ \multicolumn{1}{c|}{$s$, $s^c$} & One State, The Other State ($s, s^c\in\mathcal{S}_{GB}; s^c \neq s$)
\\ \multicolumn{1}{c|}{$X_G$, $X_B$} & Direct Losses at Good State, Bad State
\\ \multicolumn{1}{c|}{$\mathcal{A}_{HL}$} & Set of Two Actions ($\mathcal{A}_{HL}=\{A_H, A_L\}$)
\\ \multicolumn{1}{c|}{$A_H$, $A_L$} & Strong Protection, Weak Protection
\\ \multicolumn{1}{c|}{$\alpha_G$, $\alpha_B$} & Stationary State Protections at Good State, Bad State
\\ \multicolumn{1}{c|}{$\alpha_s$, $\alpha_{s^c}$} & Stationary State Protections
\\ \multicolumn{1}{c|}{$R_G$, $R_B$} & Threshold Coverage Levels at Good State, Bad State
\\ \hline
\end{tabular}
\end{center}
\end{table}

\section{Related Works}
\label{sec:RelatedWorks}
Recently, with fast-growing types and amounts of the networked devices and shortages of effective and state-of-art defense methods, cyber-insurance has drawn huge attention as it can transfer the unexpected cyber risks to the insurance companies   \cite{bohme2005cyber,kesan2005cyberinsurance,bohme2006models,grossklags2008secure,bohme2010modeling,shetty2010competitive,pal2014will,tosh2017risk,kesan2017strengthening,zhang2017bi,laszka2018cyber}. The existing insurance framework could bring useful insights on modeling the cyber-insurance  \cite{ehrlich1972market,kesan2005cyberinsurance}. The moral hazard models in the economics literature are good tools to capture the information asymmetry between the insured and the insurers \cite{rothschild1978equilibrium,holmstrom1979moral,holmstrom1982moral}. 
Various frameworks and methodologies have been brought up to investigate cyber-insurance contracts and their impacts to cyber risks. 

Several works have studied cyber-insurance through market-based approaches by analyzing the supply and demand relations between insurers and insureds  \cite{pal2014will,bohme2005cyber,bohme2006models,bohme2010modeling}. In  \cite{pal2014will}, Pal et al., have analyzed regulated monopolistic and competitive cyber-insurance markets, and showed that cyber-insurance can improve the network security but the insurer can make zero expected profits in monopoly markets. In  \cite{bohme2005cyber,bohme2006models,bohme2010modeling}, B{\"o}hme et al., have presented several market models of cyber-insurance with the consideration of interdependency between cyber risks and information asymmetries between insurers and insureds, and showed analytical results on the impacts of cyber-insurance to cyber-security and the vialibity of a market for cyber-insurance.

Game theory has been used to capture the interactions between insurers and insureds of cyber-insurance \cite{laszka2018cyber,grossklags2008secure,zhang2017bi}. In  \cite{laszka2018cyber}, Laszka et al., have used a two-player signaling game to capture the information asymmetry between a potential client and an insurer, and further studied incentives for auditing potential clients before cyber-insurance premium calculations. In  \cite{grossklags2008secure},  Grossklags et al., have presented several security games to capture the decision-making of network users on protections and insurance. The equilibrium analysis shows that users may seek to self-protect themselves at just slightly above the lowest protection level in the weakest-target game. In  \cite{zhang2017bi}, Zhang et al. have studied the interactions between insureds, attackers, and insurers with a bi-level game-theoretic framework in a networked environment and demonstrated the impacts of network connections to the three types of players. 

Most previous works have focused on the information asymmetry and interdependencies of cyber risks, however, their models have not captured the dynamics and correlations of the cyber risks, which have been studied with different methodologies and models \cite{stallings1995network,perrig2004security,zhu2012guidex,poolsappasit2012dynamic,kim2004measurement}. In  \cite{poolsappasit2012dynamic}, Poolsappasit et al. have used a bayesian attack graphs model to analyze the network security risk assessment and mitigation. In  \cite{kim2004measurement}, the authors have used a differential epidemic model to capture the spreading of viruses and worms in computer networks. These works aim to reduce the impacts of cyber risks through local protections, such as firewalls  \cite{cheswick2003firewalls}, intrusion detection  \cite{jones2000computer}, or moving target defenses  \cite{jajodia2012moving}, which cannot fully mitigate the risks of cyberattacks. 

In this work, we focus on studying the dynamics and correlations of the cyber risks and analyzing the impacts of the cyber-insurance to both the insureds and the insurers. We first capture the cyber risks as well as the user's deployments of local protections with Markov decision processes, which have been used variously to analyze cybersecurity  \cite{wu2010optimal,shen2007adaptive}. We then use the existing moral hazard type of principal-agent model to capture the interactions between the user and the insurer with incomplete information. The analysis of both the optimal insurance contract and the user's response to it provides useful insights on the designation of the cyber-insurance contracts in the real world.

\section{User's Optimal Protection Policies}
\label{sec:User}
We use discrete Markov decision processes (MDP) to capture the evolvements of the user's cyber risks with time, and an illustration is provided in Fig. \ref{fig:OverviewUserInsurerMDP}. Let $s_t \in \mathcal{S}$ denote the user's cyber-risk state at time $t\in\mathbb{Z}_{\geq 0}$, where $\mathcal{S} \equiv\{S_n | 1\leq n \leq N\}$ is the set of all possible cyber-risk states. Different cyber-risk states may incur various types of losses, e.g., data breaches, physical device damages, and compromised financial accounts. In this paper, we consider that all types of losses are measurable and can be quantified by monetary direct losses. We assume that each cyber-risk state $S_n\in\mathcal{S}$ is associated with a fixed direct loss $X_n\in\mathbb{R}_{\geq 0}$, and the user's direct loss at time $t$ can be denoted by $x_t\in\mathcal{X}$, where $\mathcal{X}\equiv\{X_n| 1\leq n \leq N\}$. 

\begin{figure}[http]
\centering
\includegraphics[width=0.9\textwidth]{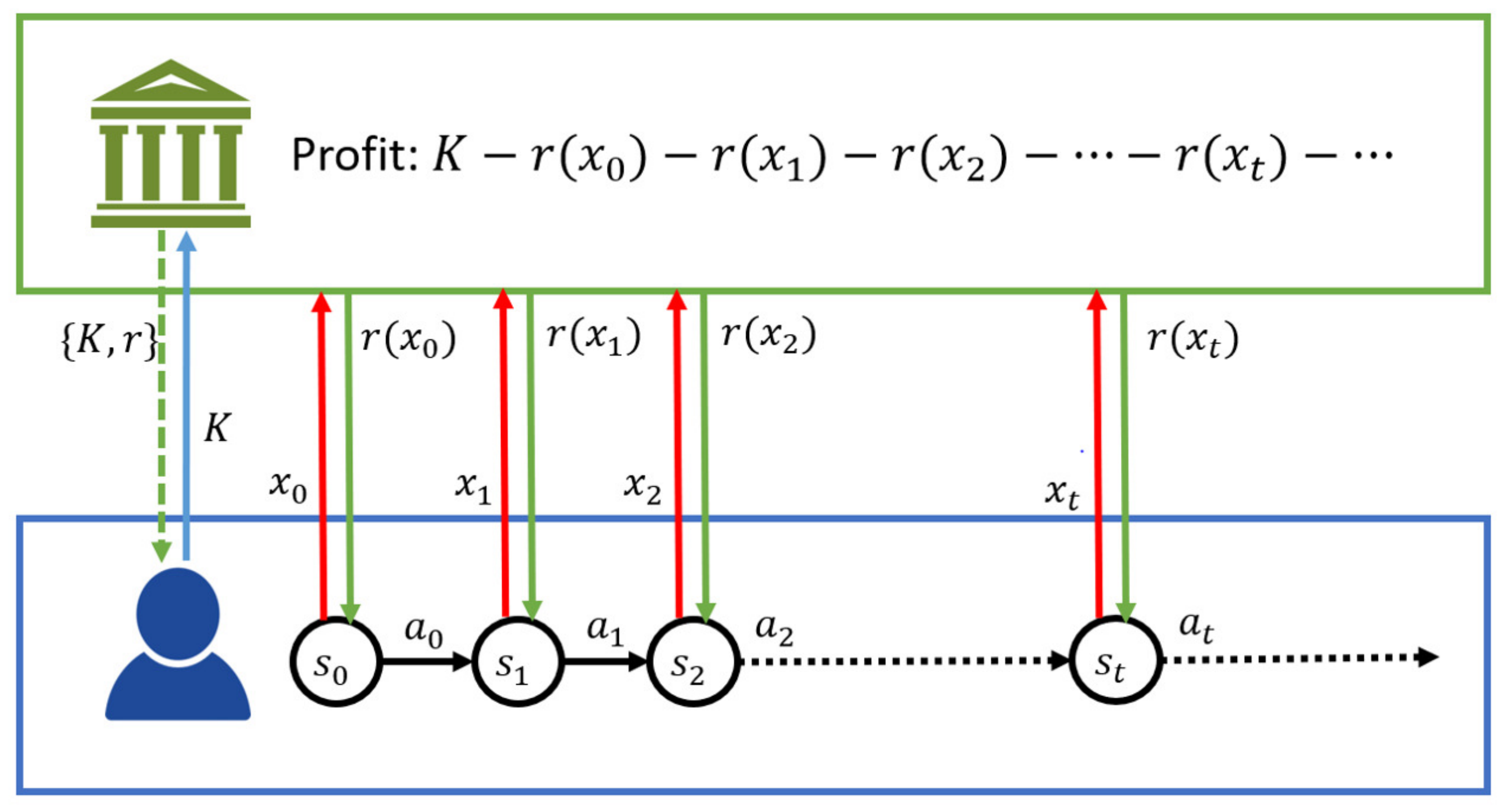}
\caption{Illustration of cyber-insurance. The dynamics of the user's cyber risks are captured by MDP with $s_t$ denoting the cyber-risk state at time $t$, which is associated with a direct loss $x_t$. The user can choose various protections $a_t$ to reduce the future losses. The objective of the user is to find the optimal protection sequence $\{a_t\}_{t\geq 0}$ which minimizes his cumulative losses. The user can also purchase cyber-insurance to mitigate his losses. The insurer first announces the insurance contract $\{K,r\}$, where $K$ and $r$ indicate the premium and the coverage function, respectively. The user can decide whether to purchase the insurance or not. If the user chooses to purchase the insurance, he must pay a premium $K$, and when he faces a loss of $x$, the insurer should provide a coverage of $r(x)$ to him. The objective of the insurer is to maximize his profit. Note that the insurer has no information of the user's protection sequences. }
\label{fig:OverviewUserInsurerMDP}
\end{figure}

The user can adopt different protections, such as firewalls, intrusion detection systems, and moving-target defenses, to reduce the possibilities of entering cyber-risk states that can incur severe losses. Let $a_t\in\mathcal{A}$ denote the protections at time $t$, where $\mathcal{A}\equiv \{A_m| 1\leq m \leq M\}$ is the set of all available protections. The transition probability $p(s_t, a_t, s_{t+1})$ denotes the probability that the user goes to state $s_{t+1}$ at time $t+1$ when he is currently in state $s_t$ and adopts protection $a_t$, which naturally captures the correlations among different cyber-risk states under different protections. Note that $\sum_{n=1}^N p(s_t, a_t, S_n) = 1$ as the user can only enter states within $\mathcal{S}$ at time $t+1$. 

We further provide two examples to illustrate the states $\mathcal{S}$ and protections $\mathcal{A}$ of the user.   
\begin{example}
Suppose a customer whose computer faces threats of Ransomware. In this example, the customer has $\mathcal{S} = \{S_1, S_2\}$ and $\mathcal{A} = \{A_1, A_2\}$. States $S_1$ and $S_2$ denote that the computer is secure and compromised, respectively. The customer can choose to do nothing $A_1$ or add firewalls $A_2$. The computer has a lower probability of facing Ransomware, i.e., entering state $S_2$, if the customer deploys firewalls. When the computer is compromised, the customer needs to either pay the money or replace the computer, which can be covered if he has purchased cyber-insurance. 
\end{example}

\begin{example}
Consider a cloud center who aims to protect itself from the damages caused by potential attackers. In this example, the cloud center has $\mathcal{S} = \{S_1, S_2, S_3\}$ and $\mathcal{A} = \{A_1, A_2, A_3, A_4\}$. State $S_1$ denotes the situation when it is safe and faces no cyberattacks. However, the cloud center may encounter data breaches and denial-of-services, which are represented by states $S_{2}$ and $S_{3}$, respectively. Each state $S_n\in\mathcal{S}$ is associated with a direct loss $X_{n}$. For example, at time $t$, $s_t = S_{2}$ indicates that the cloud center faces data breaches which inflict $X_{2}$ direct losses to it. Specially, the direct loss $X_1 = 0$ at state $S_1$, which indicates that the cloud center has no loss when it faces no cyberattacks. To defend against these cyberattacks, the cloud center may deploy firewalls, intrusion detection systems, and moving-target defense, which are represented by protections $A_{2}$, $A_{3}$, and $A_{4}$, respectively. Specially, the cloud center can also choose to do nothing, which is denoted as $A_{1}$. The cloud center has smaller probabilities of entering states with high losses if he deploys protections, however, these protections are also costly. The cloud center can also purchase cyber-insurance to cover part of its losses and help it recover from cyber-incidents. The objective of the cloud center is to find an optimal deployment of protections and cyber-insurance such that its future cumulative losses are minimized. Cases involve other cyber risks or protections can be extended through increasing the size of $\mathcal{S}$ and $\mathcal{A}$. 
\end{example}

Besides the protections, the user can also mitigate his losses through purchasing cyber-insurance. After paying a premium to an insurer, the user could receive a coverage of $r(x_t)$ from the insurer when he faces a direct loss of $x_t$, where $r:\mathbb{R}_{\geq 0}\rightarrow \mathbb{R}_{\geq 0}$ is the coverage function of the insurance. The objective of the user is to find an optimal sequence of protections $\{a_t\}_{t\in\mathbb{Z}_{\geq 0}}$ that minimizes the expected cumulative effective losses given the initial state $s_0\in\mathcal{S}$, which can be captured as 
\begin{equation}
\label{eq:UserTotalEffectiveLosses}
\min\limits_{\{a_t\}} \mathbb{E}\left\lbrace \sum\limits_{t=0}^{\infty} \delta^t  \left( x_t - r(x_t) + c(a_t) \right) \Big| s_0   \right\rbrace,
\end{equation}
where function $c(a_t)$ returns the cost of protection $a_t$, and $\delta \in (0,1)$ is the discount factor which indicates that future losses are valued less at time $0$. 

In this paper, we consider that the user decides his protections contingent on his current states. Such feedback strategy allows the user to maintain his security level by adopting the necessary protections that can reduce the losses from cyberattacks and save the costs of protections at the same time. The strategy is usually denoted by a stationary protection policy $\pi:\mathcal{S}\rightarrow \mathcal{A}$, e.g., $\pi(S_n)=A_m$ indicates that the user always takes protection $A_m$ at state $S_n$. As a result, the expected cumulative effective losses of the user under a stationary protection policy can be captured as
\begin{equation}
\label{eq:UserObjJ}
\begin{array}{l}
V(s_0,\pi,r) =\mathbb{E}\left\lbrace \sum\limits_{t=0}^{\infty} \delta^t  \left( x_t - r\left(x_t\right) + c\left(\pi\left(s_t\right)\right) \right) \Big| s_0 \right\rbrace.
\end{array}
\end{equation}

The user aims to find an optimal stationary protection policy $\pi_r^*\in\Omega$ that minimizes his expected cumulative effective losses given the coverage function $r$, and such objective can be captured as
\begin{equation}
\label{eq:UserObj}
\begin{array}{l}
\pi_r^* \in \arg \min\limits_{\pi\in\Omega}V(s_0,\pi,r), 
\end{array}
\end{equation}
where $\Omega$ denotes the set of all possible stationary policies. 

A rational user purchases the insurance only when the expected cumulative effective losses plus the premium under the insurance is lower than the losses without insurance, which can be captured as
\begin{equation}
\label{eq:UserIR}
 V(s_0,\pi_r^*,r) + K \leq V(s_0,\pi_{r_0}^*,r_0),   
\end{equation}
where $K\in\mathbb{R}_{\geq 0}$ is the premium of the insurance and $r_0$ indicates a zero coverage function, i.e., $r_0(X) = 0$ for all $X\in\mathbb{R}_{\geq 0}$, which corresponds to the case when there is no insurance. The fact that the user purchases the insurance only when inequality (\ref{eq:UserIR}) is satisfied must be considered by the insurer while designing effective insurance contracts. 

The optimal protection policy $\pi_r^*$ could be obtained by solving (\ref{eq:UserObj}) with either dynamic programming or linear programming \cite{filar2012competitive}, and we summarize both approaches in the following subsections. 
\subsection{Dynamic Programming Approach}
Recall equation (\ref{eq:UserObjJ}), let us define the loss function $l(s_t,a_t,r)=  x_t - r\left(x_t\right) + c\left(a_t\right) $ which indicates the effective loss at time $t$ under the coverage function $r$. Note that $l(s_t,a_t,r)$ does not take $x_t$ as variable since the direct loss $x_t$ is uniquely determined by the user's cyber-risk state $s_t$. Thus, we can express the expected cumulative effective losses as
\begin{equation}
\label{eq:UserValue}
\begin{array}{l}
V(s_0,\pi,r) = \mathbb{E}\left\lbrace  \sum\limits_{t=0}^\infty \delta^t l(s_t,\pi(s_t),r) \left| s_0 \right. \right\rbrace   =  l(s_0,\pi(s_0),r)   + \delta \sum\limits_{s'\in\mathcal{S}}  p(s_0,\pi(s_0),s') V(s',\pi,r),
\end{array}
\end{equation}
where $l(s_0,\pi(s_0),r)$ and $\delta \sum\limits_{s\in\mathcal{S}}  p(s_0,\pi(s_0),s) V(s,\pi,r)$ capture the effective loss at time $0$ and the future expected cumulative effective losses, respectively. As a result, given a coverage function $r$, the optimal protection policy $\pi_r^*$ can be found by the following dynamic programming operators  \cite{filar2012competitive}.
\begin{equation}
\label{eq:UserDynamicProgrammingPi}
\begin{array}{l}
\pi_r^*(s) \in \arg\min\limits_{a\in\mathcal{A}}  \left\lbrace      \begin{array}{l}
l(s,a,r)   +  \delta \sum\limits_{s'\in\mathcal{S}}  p(s,a,s') V(s',\pi_r^*, r)  
\end{array}           \right\rbrace,
\end{array}
\end{equation}
\begin{equation}
\label{eq:UserDynamicProgrammingV}
V(s,\pi_r^*, r)   = l(s,\pi_r^*(s),r)   + \delta \sum\limits_{s'\in\mathcal{S}}  p(s,\pi_r^*(s),s') V(s',\pi_r^*, r).
\end{equation}
By iterating (\ref{eq:UserDynamicProgrammingPi}) and (\ref{eq:UserDynamicProgrammingV}) for all states $s\in\mathcal{S}$ until no further changes take place, we can achieve $\pi_r^*$ and $V(s,\pi_r^*, r)$, and the convergence to the optimum is guaranteed  \cite{filar2012competitive}.

\subsection{Linear Programming Approach}
Besides the dynamic programming, we can also use linear programming to solve the user's problem (\ref{eq:UserObj}). Problem (\ref{eq:UserObj}) can be reformulated into a linear programming problem in the standard form as \cite{filar2012competitive} 
\[\begin{array}{c}
\min\limits_{\eta} \mathbf{d}^T \eta  \\
\begin{array}{cc}
{\text{s.t.}}&{O\eta = \mathbf{b},\eta \geq \mathbf{0},}
\end{array}
\end{array}\]
with its dual problem
\[\begin{array}{c}
\max\limits_{ \theta } \mathbf{b}^T\theta \\
\begin{array}{cc}
{\text{s.t.}}&{ \mathbf{d}-  O^T\theta \geq \mathbf{0}, }
\end{array}
\end{array}\]
where $\eta \in\mathbb{R}^{NM\times 1}$ and $\theta \in\mathbb{R}^{N\times 1}$ are denoted as the prime variable and the dual variable, respectively. Note that $N$ and $M$ are the sizes of $\mathcal{S}$ and $\mathcal{A}$, respectively. Vector $\mathbf{b}\in\mathbb{R}^{N\times 1}$ is a column vector of size $N$ with all the elements equal to $1$. Vector $\mathbf{d}\in\mathbb{R}^{NM\times 1}$ is a column vector of size $NM$ which captures the per-state and per-action losses, and the $(N(n-1) + m)$-th element of it equals $l(S_n,A_m,r)$, where $1\leq n \leq N$ and $1\leq m \leq M$. Matrix $O = E-\delta P$, where matrix $E\in\mathbb{R}^{N\times NM}$ has that $E_{n,N(n-1)+m}=1$ for $1\leq n \leq N$ and $1\leq m \leq M$ and all the other elements are $0$, and matrix $P\in\mathbb{R}^{N\times NM}$ is the transition probability matrix where $P_{n',N(n-1)+m} = p(n,a_m,n')$, $1\leq n \leq N$, $1\leq n' \leq N$, and $1\leq m \leq M$.

The optimal primal variable $\eta^*$ represents the optimal state-action frequencies; the optimal dual variable $\theta^*$ represents the expected cost-to-go values of the states for the given coverage function $r$, i.e., $\theta_n^* = V(S_n,\pi_r^*, r)$ for $1\leq n \leq N$. After solving the dual problem, we can find the optimal protection policy $\pi_r^*$ by plugging  $V(S_n,\pi_r^*, r)$ into (\ref{eq:UserDynamicProgrammingPi}).

\section{Insurer's Optimal Insurance Contracts}
\label{sec:Insurer}  
In this section, we present and analyze the insurer's problem of designing cyber-insurance contracts. An illustration of the interactions between the user and the insurer has been provided in Fig. \ref{fig:OverviewUserInsurerMDP}. Note that the insurer first announces the insurance contract $\{K,r\}$, and the user then makes the decision of purchasing the insurance based on the expected cumulative effective losses under that insurance contract. If the user chooses to purchase the insurance, the insurer instantly earns a profit of $K$ at time $0$, but the insurer is required to pay the coverage of $r(x_t)$ when the user faces a loss of $x_t$ at time $t$. As a result, the insurer's operating profit can be captured as $K - \mathbb{E}\left\{\sum_{t= 0}^\infty \delta^t r(x_t)\big| s_0\right\}$, where $\mathbb{E}\{\sum_{t= 0}^\infty \delta^t r(x_t)\big| s_0\} $ denotes the expected cumulative coverage provided by the insurer to the user. The objective of the insurer is to find an optimal insurance contract $\{K^*,r^*\}$ that maximizes his operating profit. As a result, the insurer's problem can be captured as
\begin{equation}
\label{eq:InsurerObj}
\begin{array}{l}
\max\limits_{\{K,r\}} K - \mathbb{E}\left\{\sum\limits_{t= 0}^\infty \delta^t r(x_t)\big| s_0\right\} \\
\begin{array}{lll}
{\text{s.t.} } & {K - \mathbb{E}\left\{\sum\limits_{t= 0}^\infty \delta^t r(x_t)\big| s_0\right\} \geq 0;} & {(\ref{eq:InsurerObj}a)}\\
{} & {V(s_0,\pi_r^*, r) + K \leq V(s_0,\pi_{r_0}^*, r_0). }  & {(\ref{eq:InsurerObj}b)}
\end{array}
\end{array}
\end{equation} 
Constraint (\ref{eq:InsurerObj}a) captures the insurer's individual rationality that he chooses not to provide the insurance if he has a negative profit. Constraint (\ref{eq:InsurerObj}b) captures the user's individual rationality on purchasing the insurance and it comes from inequality (\ref{eq:UserIR}).

By solving problem (\ref{eq:InsurerObj}), the insurer can find an optimal insurance contract which maximizes his operating profit and is acceptable by the user. After combing the user's problem and the insurer's problem, the interactions of the user and the insurer can be captured by the following principal-agent problem.
\begin{equation}
\label{eq:InsurerUserTogether}
\begin{array}{c}
\max\limits_{\{K,r\}} K - \mathbb{E}\left\{ \sum\limits_{t = 0}^\infty \delta^t r(x_t)\big| s_0\right\} \\
\begin{array}{lll}
{\text{s.t.} }&{ K - \mathbb{E}\left\{ \sum\limits_{t = 0}^\infty \delta^t r(x_t)\big| s_0\right\} \geq 0;}&{(\ref{eq:InsurerUserTogether}a)} 
\\ {}&{ V(s_0,\pi_r^*, r) + K \leq V(s_0,\pi_{r_0}^*, r_0);}&{(\ref{eq:InsurerUserTogether}b)} 
\\  {}&{
\pi_r^*  \in  \arg\min\limits_{\pi\in\Omega}V(s_0,\pi,r).}&{(\ref{eq:InsurerUserTogether}c)}  
\end{array}  \end{array}
\end{equation}

Problem (\ref{eq:InsurerUserTogether}) is an optimization problem nested with various sub-optimization problems. The solution of Problem (\ref{eq:InsurerUserTogether}) captures both the user's objective of minimizing his expected cumulative effective losses and the insurer's objective of maximizing his own profit with the consideration of the user's rational choice of purchasing the insurance. To find the solution of problem (\ref{eq:InsurerUserTogether}), we can first solve the user's problem (\ref{eq:UserObj}) and obtain the optimal protection policies $\pi_r^*$ and the corresponding losses $V(s_0,\pi_r^*,r)$ to the coverage function $r$, and then achieve $\{K^*, r^*\}$ by solving the insurer's problem (\ref{eq:InsurerObj}). 

We can simplify the insurer's problem (\ref{eq:InsurerObj}) by exploring the expected cumulative effective losses and the optimal protection policies as discussed in the following subsection.   
\subsection{Insurer's Problem: Simplifications and Direct Conclusions}
We first notice that the expected cumulative coverage is equal to the expected cumulative direct losses minus the expected cumulative effective losses, i.e., 
\[ \begin{array}{l}
\mathbb{E}\left\lbrace  \sum\limits_{t=0}^\infty \delta ^t r(x_t) \left| s_0 \right. \right\rbrace \\  = \mathbb{E}\left\lbrace  \sum\limits_{t=0}^\infty \delta ^t \left(x_t + c(\pi_r^*(s_t)) \right) \left| s_0  \right. \right\rbrace  - \mathbb{E}\left\lbrace  \sum\limits_{t=0}^\infty \delta ^t \left( x_t - r(x_t) +c(\pi_r^*(s_t)) \right)  \left| s_0\right. \right\rbrace
\\ = V(s_0,\pi_r^*,r_0) - V(s_0,\pi_r^*,r),
\end{array}   \]
where $V(s_0,\pi_r^*,r_0)$ can be interpreted as the expected cumulative effective losses given the optimal protection policy $\pi_r^*$ and the zero coverage function $r_0$. Thus, Problem (\ref{eq:InsurerObj}) can be rewritten as follows.
\begin{equation}
\label{eq:InSim}
\begin{array}{c}
\max\limits_{\{K,r\}} K - \left( V(s_0,\pi_r^*,r_0) - V(s_0,\pi_r^*,r)   \right) \\
\begin{array}{lll}
{\text{s.t. } }&{K - \left( V(s_0,\pi_r^*,r_0) - V(s_0,\pi_r^*,r) \right) \geq 0;}&{(\ref{eq:InSim}a)}
\\ {}&{V(s_0,\pi_r^*,r) + K \leq V(s_0,\pi_{r_0}^*,r_0).}&{(\ref{eq:InSim}b)}
\end{array}
\end{array} 
\end{equation}
Constraint (\ref{eq:InSim}b) indicates that the maximum premium that can be charged by the insurer for a coverage function $r$ is
\begin{equation}
\label{eq:InsurerTmax}
K_{\max} = V(s_0,\pi_{r_0}^*,r_0) -  V(s_0,\pi_r^*,r).
\end{equation}
The user chooses not to purchase the insurance with a premium $K>K_{\max}$ because the losses and the premium under the insurance are higher than the losses without insurance. 

As a result, problem (\ref{eq:InsurerObj}) is equivalent to the following problem after letting $K$ be equal to $K_{\max}$ and plugging (\ref{eq:InsurerTmax}) into its objective function and constraint.
\begin{equation}
\label{eq:InsurerObjRe}
\begin{array}{l}
\max\limits_{r\in\mathcal{R}} V(s_0,\pi_{r_0}^*,r_0) - V(s_0,\pi_r^*,r_0)\\
\begin{array}{ll}
{\text{s.t. } }&{V(s_0,\pi_{r_0}^*,r_0) - V(s_0,\pi_r^*,r_0) \geq 0,}
\end{array}
\end{array}  
\end{equation}
where $\mathcal{R}$ denotes the set of all possible coverage functions, and the constraint indicates that the profit of the insurer cannot be negative. After solving (\ref{eq:InsurerObjRe}), we can find the optimal coverage function $r^*$, and then the optimal premium can be computed through (\ref{eq:InsurerTmax}). Similarly, the principal-agent problem (\ref{eq:InsurerUserTogether}) can also be rewritten as 
\begin{equation}
\label{eq:InsurerUserTogetherRe}
\begin{array}{c}
\max\limits_{r \in\mathcal{R}} V(s_0,\pi_{r_0}^*,r_0) - V(s_0,\pi_r^*,r_0) \\
\begin{array}{lll}
{\text{s.t.} }&{V(s_0,\pi_{r_0}^*,r_0) -  V(s_0,\pi_r^*,r_0) \geq 0;}&{(\ref{eq:InsurerUserTogetherRe}a)} \\  {}&{
\pi_r^*  \in  \arg\min\limits_{\pi\in\Omega}V(s_0,\pi,r).}&{(\ref{eq:InsurerUserTogetherRe}b)}  
\end{array}  \end{array}
\end{equation}
Comparing to (\ref{eq:InsurerObj}) and (\ref{eq:InsurerUserTogether}), we only need to find the optimal protection policies $\pi_r^*$ to obtain the optimal insurance contract $\{K^*, r^*\}$ through (\ref{eq:InsurerObjRe}) and (\ref{eq:InsurerUserTogetherRe}). 

One useful insight regarding the operating profit could be obtained without solving (\ref{eq:InsurerObjRe}) or (\ref{eq:InsurerUserTogetherRe}), which is summarized in the following remark and proposition. 
\begin{remark}
\label{rem:InsurerZeroOperatingProfitWithoutSolving}
Any coverage function $r$ that yields $\pi_r^* = \pi_{r_0}^*$, i.e., the user has the same optimal protection policy between the case under the coverage function $r$ and the case under no insurance, is a feasible solution with the corresponding premium $K = V(s_0,\pi_{r_0}^*,r_0) -  V(s_0,\pi_{r}^*,r) = V(s_0,\pi_{r_0}^*,r_0) -  V(s_0,\pi_{r_0}^*,r) \geq 0$ as $V(s_0,\pi_{r_0}^*,r) \leq V(s_0,\pi_{r_0}^*,r_0)$, and the insurer has a zero operating profit under that insurance contract as $V(s_0,\pi_{r_0}^*,r_0) -  V(s_0,\pi_r^*,r_0)= V(s_0,\pi_{r_0}^*,r_0) -  V(s_0,\pi_{r_0}^*,r_0) = 0 $. 
\end{remark}
\begin{proposition}
\label{pro:InsurerOptimalGeneral}
Any insurance contract $\{K,r\}$ that yields $\pi_r^* = \pi_{r_0}^*$ and meets (\ref{eq:InsurerTmax}) is optimal for the insurer, and the insurer has a zero operating profit under that contract. 
\end{proposition}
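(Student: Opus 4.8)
The plan is to show that zero is simultaneously an upper bound and a lower bound on the insurer's operating profit, so that any contract attaining it must solve the insurer's problem. I would work directly from the equivalent reduced formulation (\ref{eq:InsurerObjRe}), whose objective is precisely the operating profit $V(s_0,\pi_{r_0}^*,r_0)-V(s_0,\pi_r^*,r_0)$.

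First I would establish the \emph{upper} bound. The key observation is that, by its definition in (\ref{eq:UserObj}), $\pi_{r_0}^*$ minimizes $V(s_0,\cdot,r_0)$ over the entire stationary policy set $\Omega$. Since $\pi_r^*\in\Omega$ for every coverage function $r\in\mathcal{R}$, substituting $\pi_r^*$ as a candidate into this minimization gives $V(s_0,\pi_{r_0}^*,r_0)\leq V(s_0,\pi_r^*,r_0)$, and hence the operating profit $V(s_0,\pi_{r_0}^*,r_0)-V(s_0,\pi_r^*,r_0)\leq 0$ for every feasible $r$. This is the crux: no coverage function can push the insurer's profit strictly above zero, because replacing the user's reference policy $\pi_{r_0}^*$ by any other policy can only weakly increase the expected cumulative effective loss evaluated under the zero coverage $r_0$.

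Next I would invoke the feasibility constraint in (\ref{eq:InsurerObjRe}), which requires $V(s_0,\pi_{r_0}^*,r_0)-V(s_0,\pi_r^*,r_0)\geq 0$. Combining this lower bound with the upper bound from the previous step sandwiches the operating profit, so the optimal value of (\ref{eq:InsurerObjRe}) --- and equivalently of the principal-agent problem (\ref{eq:InsurerUserTogetherRe}) and of the original insurer's problem (\ref{eq:InsurerObj}) --- is exactly zero. Finally, I would appeal to Remark \ref{rem:InsurerZeroOperatingProfitWithoutSolving}, which already shows that any contract with $\pi_r^*=\pi_{r_0}^*$ and premium given by (\ref{eq:InsurerTmax}) is feasible and attains operating profit equal to zero. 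Since zero is the optimal value, every such contract is optimal for the insurer and yields the claimed zero operating profit.

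I do not anticipate a genuine difficulty; the argument is essentially a one-line sandwich bound once the right inequality is identified. The only point requiring care is the justification that $\pi_r^*$ may be substituted into the optimality inequality for $\pi_{r_0}^*$ --- this is legitimate precisely because both are stationary policies in the common set $\Omega$, so $\pi_r^*$ is an admissible candidate in the minimization that defines $\pi_{r_0}^*$. It is also worth emphasizing that the notion of optimality invoked here is that of the reduced problem (\ref{eq:InsurerObjRe}), which the preceding derivation has already shown to be equivalent to (\ref{eq:InsurerObj}).
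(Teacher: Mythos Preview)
Your proposal is correct and follows essentially the same approach as the paper: both argue that the optimality of $\pi_{r_0}^*$ for $V(s_0,\cdot,r_0)$ forces the operating profit to be nonpositive, then invoke Remark~\ref{rem:InsurerZeroOperatingProfitWithoutSolving} to exhibit a feasible contract achieving zero. The paper's proof is simply a terser version of what you wrote.
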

\begin{proof}
The operating profit of the insurer has that $V(s,\pi_{r_0}^*,r_0) - V(s,\pi_{r}^*,r_0) \leq 0 $ from (\ref{eq:InsurerUserTogetherRe}b), i.e., $ \pi_{r_0}^* \in \arg \min \limits_{\pi \in \Omega} V(s,\pi,r_0)$. Thus, the maximum profit that the insurer can achieve is $0$. As a result, if the user has $\pi_r^* = \pi_{r_0}^*$ under an insurance contract $\{K, r\}$, that contract is feasible from Remark \ref{rem:InsurerZeroOperatingProfitWithoutSolving} and it is also optimal.
\end{proof}
Remark \ref{rem:InsurerZeroOperatingProfitWithoutSolving} indicates that the insurer has a zero operating profit when the user has the same protection policies with or without insurance, which explains market neutrality. Proposition \ref{pro:InsurerOptimalGeneral} indicates that the insurance contract in Remark \ref{rem:InsurerZeroOperatingProfitWithoutSolving} is optimal for the insurer. We denote this conclusion as the zero operating profit principle. 

\section{Case Study: Two-State Two-Action User and Linear Coverage Insurer}
\label{sec:Case}
In this section, we present a representative case where the user has two states and two actions and the insurer provides the linear coverage. Analysis of this case provides structural insights of the insurance contracts. Recall Section \ref{sec:User}, the user in this case has the set of states $\mathcal{S}_{GB}\equiv \{S_G,S_B\}$, where $S_G$ and $S_B$ indicate good state and bad state, respectively. The losses that associated with the states can be further identified as $X_G$ and $X_B$. The difference between the good state and the bad state is that the user has lower losses at the good state than that at the bad state, i.e., $0 \leq X_G < X_B$. 

To reduce the losses, the user can choose to take a strong protection $A_H$ or a weak protection $A_L$, in other words, the user has the action set $\mathcal{A}_{HL}=\{A_H,A_L\}$. We further use shorthand notations $C_H$ and $C_L$ to represent the costs of protections $A_H$ and $A_L$, respectively, i.e, $c(A_H)= C_H$ and $c(A_L) = C_L$. The differences between a strong protection and a weak protection can be identified in detail as follows:
\begin{itemize}
\item $p(s,A_H,S_B)< p(s,A_L,S_B),\forall s\in\mathcal{S}_{GB}$, which indicates that the user has a higher probability of going to the bad state when he has a weak protection. 
\item $p(s,A_L,S_G)< p(s,A_H,S_G), \forall s\in\mathcal{S}_{GB}$, which indicates that the user has a higher probability of going to the good state when he has a strong protection. 
\item $0 \leq C_L < C_H$, which indicates that the cost of a strong protection is higher than the cost of a weak protection. 
\end{itemize}
These differences capture the fact that a strong protection can make the user more secure but its cost is also higher. 

With two states and two actions, the user has only four possible stationary protection policies, i.e., $\Omega = \{\Pi_{HH},\Pi_{HL},\Pi_{LH},\Pi_{LL}\}$, where 
\begin{itemize}
\item $\Pi_{HH}(S_G) = A_H$ and $\Pi_{HH}(S_B) = A_H$;
\item $\Pi_{HL}(S_G) = A_H$ and $\Pi_{HL}(S_B) = A_L$;
\item $\Pi_{LH}(S_G) = A_L$ and $\Pi_{LH}(S_B) = A_H$;
\item $\Pi_{LL}(S_G) = A_L$ and $\Pi_{LL}(S_B) = A_L$. 
\end{itemize}
An optimal protection policy $\pi^*\in\Omega$ can be achieved by solving Problem (\ref{eq:UserObj}) which minimizes the user's expected cumulative effective losses.

Besides protections, the user can also purchase the insurance to further mitigate his losses. We consider that the insurer offers a linear coverage with $R\in [0,1]$ denoting the coverage level of the insurance, i.e., $r(x) = Rx$. Specially, $R=0$ and $R=1$ indicate no coverage and full coverage, respectively. 

Methods in Sections \ref{sec:User} and \ref{sec:Insurer} can be used to find the optimal protection policy of the user and the optimal insurance contract for the insurer. Since there are only two states and two actions for the user, we can find them analytically. 

\subsection{User's Optimal Protection Policy}
We first introduce several notations to simplify representations. Since the user has only two states $S_G$ and $S_B$, we use $s^c  \neq s$ to denote the other state for a given state $s\in\mathcal{S}_{GB}$. Since the user adopts a stationary protection policy, i.e., he has fixed protections at each state, we identify his state protections as $\alpha_{G}$ and $\alpha_{B}$ for the good state and the bad state, respectively. We further define the action dependent expected cumulative effective loss function as follows
\begin{equation}
\label{eq:UserJBar}
\begin{array}{l}
\overline{V}(s,\alpha_s;\alpha_{s^c},R) \\ = l(s,\alpha_s,R)    + \delta p(s,\alpha_s,S_G) \overline{V}(S_G,\alpha_G;\alpha_B,R)  + \delta  p(s,\alpha_s,S_B) \overline{V}(S_B,\alpha_B;\alpha_G,R).
\end{array}
\end{equation}
\begin{remark}
\label{rem:UserActionDependentExpectedTotalEffectiveLosses}
For a protection policy $\pi$ that has $\pi(S_G)=\alpha_{G}$ and $\pi(S_B)=\alpha_{B}$, the expected cumulative effective loss function (\ref{eq:UserValue}) is equivalent to the action dependent expected cumulative effective loss function (\ref{eq:UserJBar}), i.e., 
\[V(S_G,\pi,R) = \overline{V}(S_G,\pi(S_G);\pi(S_B),R) = \overline{V}(S_G,\alpha_{G};\alpha_{B},R);  \]
\[V(S_B,\pi,R) = \overline{V}(S_B,\pi(S_B);\pi(S_G),R) = \overline{V}(S_B,\alpha_{B};\alpha_{G},R).  \]
\end{remark}
As a result, the dynamic programming operators (\ref{eq:UserDynamicProgrammingPi}) and (\ref{eq:UserDynamicProgrammingV}) can be written as
\begin{equation}
\label{eq:UserDynamicProgrammingPiReG}
\begin{array}{l}
\pi_R^*(S_G) \in \arg\min\limits_{\alpha_{G}\in\mathcal{A}_{HL}} \overline{V}(S_G,\alpha_{G};\pi_R^*(S_B),R);
\end{array}
\end{equation}
\begin{equation}
\label{eq:UserDynamicProgrammingPiReB}
\begin{array}{l}
\pi_R^*(S_B) \in \arg\min\limits_{\alpha_{B}\in\mathcal{A}_{HL}} \overline{V}(S_B,\alpha_{B};\pi_R^*(S_G),R) ,
\end{array}
\end{equation}
where
\begin{equation}
\label{eq:UserDynamicProgrammingVReG}
\begin{array}{l}
\overline{V}(S_G,\alpha_G;\alpha_B,R)  = l(S_G,\alpha_G,R)  \\  \ \ \ \ \ \ \ \  + \delta p(S_G,\alpha_G,S_G) \overline{V}(S_G,\alpha_G;\alpha_B,R)  + \delta  p(S_G,\alpha_G,S_B) \overline{V}(S_B,\alpha_B;\alpha_G,R);
\end{array}
\end{equation}
\begin{equation}
\label{eq:UserDynamicProgrammingVReB}
\begin{array}{l}
\overline{V}(S_B,\alpha_B;\alpha_G,R)  = l(S_B,\alpha_B,R)  \\  \ \ \ \ \ \ \ \  + \delta p(S_B,\alpha_B,S_G) \overline{V}(S_G,\alpha_G;\alpha_B,R)  + \delta  p(S_B,\alpha_B,S_B) \overline{V}(S_B,\alpha_B;\alpha_G,R).
\end{array}
\end{equation}
Both (\ref{eq:UserDynamicProgrammingVReG}) and (\ref{eq:UserDynamicProgrammingVReB}) are linear equations on $\overline{V}(S_G,\alpha_G;\alpha_B,R)$ and $\overline{V}(S_B,\alpha_B;\alpha_G,R) $, thus, we can solve them together and achieve
\begin{equation}
\label{eq:UserWidehatJSG}
\begin{array}{l}
\overline{V}(S_G,\alpha_{G};\alpha_{B},R)  =\frac{ (1 - \delta p(S_B,\alpha_{B},  S_B) )l(S_G,\alpha_{G},R)  +   \delta p(S_G,\alpha_{G},S_B)   l(S_B,\alpha_{B},R)  }{I_p(\alpha_{G},\alpha_{B}) };
\end{array}
\end{equation}
\begin{equation}
\label{eq:UserWidehatJSB}
\begin{array}{l}
\overline{V}(S_B,\alpha_{B};\alpha_{G},R)  = \frac{ \delta p(S_B,\alpha_{B},  S_G)l(S_G,\alpha_{G},R)  +  ( 1 - \delta p(S_G,\alpha_{G},S_G)) l(S_B,\alpha_{B},R) }{I_p(\alpha_{G},\alpha_{B})},
\end{array}
\end{equation}
where
\begin{equation}
\label{eq:UserIp}
\begin{array}{c}
I_p(\alpha_{G},\alpha_{B}) =  \Big(1 - \delta p(S_G,\alpha_{G},S_G) \Big) \Big( 1- \delta p(S_B,\alpha_{B},S_B) \Big)   - \delta^2 p(S_G,\alpha_{G},S_B) p(S_B,\alpha_{B},S_G).
\end{array} 
\end{equation}
As a result, we can find $\pi_R^*$ by solving (\ref{eq:UserDynamicProgrammingPiReG}) and (\ref{eq:UserDynamicProgrammingPiReB}) with (\ref{eq:UserWidehatJSG}) and (\ref{eq:UserWidehatJSB}), respectively. Since there are only two protection choices $A_H$ and $A_L$, we can find the optimal protection policy by comparing the action dependent expected cumulative effective losses under $A_H$ and $A_L$.
\begin{lemma}
\label{lem:UserVKBSol2}
The optimal protection policy $\pi_R^*$ given the coverage level $R$ can be summarized as follows:
{ \small
\begin{itemize}
\item $\pi_R^* = \Pi_{LL}$ if and only if $\overline{V}(S_G,A_H;A_L,R)  \geq \overline{V}(S_G,A_L;A_L,R)$ and $\overline{V}(S_B,A_H;A_L,R)  \geq \overline{V}(S_B,A_L;A_L,R)$;
\item $\pi_R^* = \Pi_{LH}$ if and only if  $\overline{V}(S_G,A_H;A_H,R)  \geq \overline{V}(S_G,A_L;A_H,R)$ and $\overline{V}(S_B,A_H;A_L,R)  < \overline{V}(S_B,A_L;A_L,R)$;
\item $\pi_R^* = \Pi_{HL}$ if and only if $\overline{V}(S_G,A_H;A_L,R)  < \overline{V}(S_G,A_L;A_L,R)$ and $\overline{V}(S_B,A_H;A_H,R)  \geq \overline{V}(S_B,A_L;A_H,R)$;
\item $\pi_R^* = \Pi_{HH}$ if and only if $\overline{V}(S_G,A_H;A_H,R)  < \overline{V}(S_G,A_L;A_H,R)$ and $\overline{V}(S_B,A_H;A_H,R)  < \overline{V}(S_B,A_L;A_H,R)$.
\end{itemize}
}
\end{lemma}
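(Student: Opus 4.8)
The plan is to treat the dynamic-programming operators \eqref{eq:UserDynamicProgrammingPiReG} and \eqref{eq:UserDynamicProgrammingPiReB} as the optimality conditions that any optimal stationary policy $\pi_R^*$ must satisfy, and to turn the two \emph{argmin} statements into explicit pairwise comparisons. Because the action set $\mathcal{A}_{HL}$ has only two elements, the \emph{argmin} over $\alpha_G$ in \eqref{eq:UserDynamicProgrammingPiReG} (resp.\ over $\alpha_B$ in \eqref{eq:UserDynamicProgrammingPiReB}) is decided by a single inequality between $\overline{V}(S_G,A_H;\cdot,R)$ and $\overline{V}(S_G,A_L;\cdot,R)$ (resp.\ the $S_B$ pair). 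The essential coupling is that the background action appearing in each comparison is the partner state's action under the \emph{candidate} policy: when testing $\Pi_{LL}$ the background at both states is $A_L$, when testing $\Pi_{HH}$ it is $A_H$, and for the two mixed policies the background is $A_H$ at one state and $A_L$ at the other. By Remark~\ref{rem:UserActionDependentExpectedTotalEffectiveLosses}, each $\overline{V}(s,\alpha_s;\alpha_{s^c},R)$ is exactly the stationary value $V(s,\pi,R)$ of the corresponding policy, so all comparisons reduce to the closed forms \eqref{eq:UserWidehatJSG}--\eqref{eq:UserWidehatJSB}.

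First I would record that an optimal stationary policy exists and lies in $\Omega=\{\Pi_{HH},\Pi_{HL},\Pi_{LH},\Pi_{LL}\}$: this is the convergence guarantee stated after \eqref{eq:UserDynamicProgrammingV}, so $\pi_R^*$ is well defined and satisfies \eqref{eq:UserDynamicProgrammingPiReG}--\eqref{eq:UserDynamicProgrammingPiReB}. For the necessity direction I would fix each policy, substitute its own partner action as the background, and read off the two best-response inequalities; for instance $\pi_R^*=\Pi_{LH}$ forces $\pi_R^*(S_G)=A_L$ to be optimal against background $A_H$, giving $\overline{V}(S_G,A_H;A_H,R)\geq\overline{V}(S_G,A_L;A_H,R)$, and $\pi_R^*(S_B)=A_H$ to be optimal against background $A_L$, giving the raw inequality $\overline{V}(S_B,A_H;A_L,R)\leq\overline{V}(S_B,A_L;A_L,R)$, which the tie-breaking rule below sharpens to a strict one. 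For sufficiency I would use the converse reading of the operators: a policy that is a best response to itself at both states is a fixed point of \eqref{eq:UserDynamicProgrammingPiReG}--\eqref{eq:UserDynamicProgrammingPiReB} and hence optimal.

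The tie-breaking is what turns the four \emph{if and only if} statements into a clean partition. I would adopt the convention, implicit in the inequalities, that an \emph{argmin} tie is resolved in favour of the weaker protection $A_L$; this is exactly why the conditions for $\Pi_{LL}$ carry both non-strict signs, those for $\Pi_{HH}$ carry both strict signs, and each mixed policy carries one of each. With this convention the $S_B$ comparison against background $A_L$ separates $\Pi_{LL}$ (non-strict) from $\Pi_{LH}$ (strict), the $S_G$ comparison against background $A_L$ separates $\Pi_{LL}$ from $\Pi_{HL}$, and analogously against background $A_H$ for $\Pi_{HH}$; combined with the existence of $\pi_R^*$ this yields mutual exclusivity and exhaustiveness.

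The hard part will be justifying that the single-state comparison in \eqref{eq:UserDynamicProgrammingPiReG}--\eqref{eq:UserDynamicProgrammingPiReB}, which pits \emph{full} stationary values against one another with the partner action held fixed, genuinely certifies \emph{global} optimality rather than a mere Nash-type equilibrium, and that it leaves no parameter configuration uncovered. Concretely, one must rule out a cyclic preference pattern (coordination at one state, anti-coordination at the other) under which no policy is a mutual best response; this cannot occur because an optimal stationary policy is guaranteed to exist, and a single-state deviation changes the stationary value with the same sign as the one-step Bellman advantage (a performance-difference / discounted-occupancy argument), so every mutual best response is globally optimal and every optimum is a mutual best response. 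The only residual delicacy is a measure-zero degeneracy in which two non-adjacent policies, such as $\Pi_{LL}$ and $\Pi_{HH}$, are simultaneously optimal; there the weaker-protection tie-breaking, or a genericity assumption on the transition and cost parameters, is needed to keep the assignment of $\pi_R^*$ single-valued. I would confine the bulk of the remaining work to verifying these sign relations through the explicit formulas \eqref{eq:UserWidehatJSG}--\eqref{eq:UserIp}, using $I_p(\alpha_G,\alpha_B)>0$ to guarantee that the denominators are positive and the comparisons are well posed.
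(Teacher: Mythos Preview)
Your proposal is correct and follows the same underlying idea as the paper: read off the dynamic-programming operators \eqref{eq:UserDynamicProgrammingPiReG}--\eqref{eq:UserDynamicProgrammingPiReB} as best-response conditions and, with only two actions, collapse each \emph{argmin} into a single pairwise comparison with the partner state's action held at the candidate policy's choice. The paper's own proof is a single sentence (``the user chooses a protection policy with lower expected cumulative effective losses in both good state and bad state''), so you are supplying considerably more justification than the authors do; in particular, the tie-breaking convention you invoke is exactly the one the paper states immediately \emph{after} the lemma, and the non-adjacency degeneracy you flag is what the paper defers to Theorem~\ref{the:UserUniqueness}.
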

\begin{proof}
The user chooses a protection policy with lower expected cumulative effective losses in both good state and bad state.
\end{proof}
We consider that the user always takes $A_L$ when $\overline{V}(s,A_H;\alpha_{s^c},R) = \overline{V}(s,A_L;\alpha_{s^c},R)$. We can further simplify the comparisons in Lemma \ref{lem:UserVKBSol2} as shown in the following proposition.
\begin{proposition}
\label{pro:UserH}
Let us define function $h:\mathcal{S}\times \mathcal{A} \times \mathcal{R}\rightarrow \mathbb{R}$ as
\[\begin{array}{l}
h(s,\alpha_{s^c},R) =  (1-R)\delta \left( p(s,A_H,s^c)   -   p(s,A_L,s^c) \right) ( X_{s^c} - X_s)   \\ \ \ \ \ \   +     (1-\delta + \delta p(S_B,\alpha_{s^c},S_G)   +   \delta p(S_G,\alpha_{s^c},S_B) )( C_H - C_L  ), 
\end{array} \]
the optimal protection policy $\pi_R^*$ can be summarized as follows:
\begin{itemize}
\item $\pi^* = \Pi_{LL}$ if and only if $h(S_G,A_L,R) \geq 0$ and $h(S_B,A_L,R) \geq 0 $;
\item $\pi^*=\Pi_{LH}$ if and only if $h(S_G,A_H,R) \geq 0$ and $h(S_B,A_L,R) < 0 $;
\item $\pi^*=\Pi_{HL}$ if and only if $h(S_G,A_L,R) < 0$ and $h(S_B,A_H,R) \geq 0 $;
\item $\pi^*=\Pi_{HH}$ if and only if $h(S_G,A_H,R) < 0$ and $h(S_B,A_H,R) < 0 $.
\end{itemize}
\end{proposition}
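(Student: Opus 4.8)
The plan is to reduce Proposition \ref{pro:UserH} to Lemma \ref{lem:UserVKBSol2} by proving a single sign identity: for every $s\in\mathcal{S}_{GB}$ and every fixed $\alpha_{s^c}\in\mathcal{A}_{HL}$,
\[\text{sign}\left(\overline{V}(s,A_H;\alpha_{s^c},R) - \overline{V}(s,A_L;\alpha_{s^c},R)\right) = \text{sign}\left(h(s,\alpha_{s^c},R)\right).\]
Once this holds, each of the four bullets of Lemma \ref{lem:UserVKBSol2} translates term-by-term into the matching bullet of the proposition: in each comparison the second argument is exactly the other state's action under the policy being tested, and the tie-breaking convention (prefer $A_L$ when the two losses are equal) corresponds to the weak inequality $h\geq 0$ selecting $A_L$.

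For the core identity I would treat $s=S_G$ in detail (the $s=S_B$ case is identical after swapping the roles of $G$ and $B$). First I would record the factored form of the denominator (\ref{eq:UserIp}), namely $I_p(\alpha_G,\alpha_B) = (1-\delta)\big(1-\delta+\delta p(S_G,\alpha_G,S_B)+\delta p(S_B,\alpha_B,S_G)\big)$, obtained by substituting $p(s,a,S_G)=1-p(s,a,S_B)$ and collecting terms; strict positivity then follows from $\delta\in(0,1)$. Rather than subtracting the two closed forms (\ref{eq:UserWidehatJSG}) over the common denominator $I_p(A_H,\alpha_B)I_p(A_L,\alpha_B)$, I would run a one-step-deviation argument on the Bellman system (\ref{eq:UserDynamicProgrammingVReG})--(\ref{eq:UserDynamicProgrammingVReB}): subtracting the equations for the two policies that differ only in the good-state action, eliminating the bad-state value difference (whose coefficients are common to both policies), and solving yields
\[\overline{V}(S_G,A_H;\alpha_B,R)-\overline{V}(S_G,A_L;\alpha_B,R) = \frac{1-\delta p(S_B,\alpha_B,S_B)}{I_p(A_H,\alpha_B)}\,\beta,\]
where $\beta=(C_H-C_L)+\delta\big(p(S_G,A_H,S_B)-p(S_G,A_L,S_B)\big)\big(\overline{V}(S_B,\alpha_B;A_L,R)-\overline{V}(S_G,A_L;\alpha_B,R)\big)$ is the one-step advantage of $A_H$ over $A_L$ under the base policy. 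Since $1-\delta p(S_B,\alpha_B,S_B)>0$ and $I_p(A_H,\alpha_B)>0$, the sign of the value difference equals the sign of $\beta$.

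It then remains to show $\beta$ carries the sign of $h(S_G,\alpha_B,R)$. Using (\ref{eq:UserWidehatJSG})--(\ref{eq:UserWidehatJSB}) and $p(s,a,S_G)+p(s,a,S_B)=1$, the bracketed value gap collapses to $\overline{V}(S_B,\alpha_B;A_L,R)-\overline{V}(S_G,A_L;\alpha_B,R)=\frac{(1-\delta)\left(l(S_B,\alpha_B,R)-l(S_G,A_L,R)\right)}{I_p(A_L,\alpha_B)}$. Multiplying $\beta$ by the positive quantity $I_p(A_L,\alpha_B)$, inserting the factored form of $I_p$ and $l(s,a,R)=(1-R)X_s+c(a)$, and pulling out the common factor $(1-\delta)$ reduces everything to a single polynomial identity $W=(1-\delta)\,h(S_G,\alpha_B,R)$, after which the chain of positive prefactors gives $\text{sign}(\Delta)=\text{sign}(h)$.

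The step I would treat most carefully, and which is the main obstacle, is matching this polynomial to the exact form of $h$, because $h$ carries the action-indexed term $\delta p(S_G,\alpha_{s^c},S_B)$ that looks misplaced (a good-state transition labeled by the bad-state action). I would resolve it by the case split $\alpha_B\in\{A_H,A_L\}$: when $\alpha_B=A_L$ the extra advantage contribution $\delta\big(p(S_G,A_H,S_B)-p(S_G,A_L,S_B)\big)(c(\alpha_B)-C_L)$ vanishes while $p(S_G,\alpha_B,S_B)=p(S_G,A_L,S_B)$ makes the cost coefficients coincide; when $\alpha_B=A_H$ that same contribution supplies precisely $\delta\big(p(S_G,A_H,S_B)-p(S_G,A_L,S_B)\big)(C_H-C_L)$, which promotes $p(S_G,A_L,S_B)$ to $p(S_G,A_H,S_B)=p(S_G,\alpha_B,S_B)$. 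In both cases the coefficient of $(X_B-X_G)$ already equals $(1-R)\delta\big(p(S_G,A_H,S_B)-p(S_G,A_L,S_B)\big)$, so the identity $W=(1-\delta)h(S_G,\alpha_B,R)$ holds identically, completing the sign equivalence and hence the reduction to Lemma \ref{lem:UserVKBSol2}.
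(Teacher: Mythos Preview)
Your argument is correct and lands on exactly the same sign identity the paper uses, but you get there by a different computation. The paper's Appendix~A subtracts the closed forms (\ref{eq:UserWidehatJSG})--(\ref{eq:UserWidehatJSB}) directly after splitting $\overline{V}=(1-R)k+b$, computes the $k$- and $b$-differences over the common denominator $I_p(A_H,\alpha_B)I_p(A_L,\alpha_B)$, and obtains in one pass
\[
\overline{V}(S_G,A_H;\alpha_B,R)-\overline{V}(S_G,A_L;\alpha_B,R)
=\frac{(1-\delta)\bigl(1-\delta p(S_B,\alpha_B,S_B)\bigr)}{I_p(A_H,\alpha_B)\,I_p(A_L,\alpha_B)}\,h(S_G,\alpha_B,R),
\]
with the ``misplaced'' factor $p(S_G,\alpha_B,S_B)$ falling out of the $b$-difference uniformly rather than via a case split. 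Your route factors the same quantity as $\dfrac{1-\delta p(S_B,\alpha_B,S_B)}{I_p(A_H,\alpha_B)}\cdot\beta$ with $I_p(A_L,\alpha_B)\,\beta=(1-\delta)\,h(S_G,\alpha_B,R)$, which is the identical identity decomposed differently. What your one-step-deviation argument buys is a cleaner dynamical interpretation---$\beta$ is precisely the policy-improvement advantage, so the sign statement is really the policy-improvement lemma specialized to a single-state deviation---at the price of the final case split on $\alpha_B\in\{A_H,A_L\}$ to reconcile the cross term $\delta(q_H-q_L)(c(\alpha_B)-C_L)$ with the coefficient $\delta p(S_G,\alpha_B,S_B)$ in $h$. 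The paper's purely algebraic route trades that structural insight for a uniform calculation with no case distinction.
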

\begin{proof}
See Appendix A.
\end{proof}
Thus, we could obtain the optimal protection policy of the user by analyzing $h(s,\alpha_{s^c},R)$, and we further have the following observation on it. 

\begin{proposition}
\label{pro:UserHMonotonicity}
Function $h(s,\alpha_{s^c},R)$ is linearly increasing on the coverage level $R$. 
\end{proposition}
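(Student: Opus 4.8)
The plan is to exploit the fact that $h(s,\alpha_{s^c},R)$ is manifestly an affine function of $R$. Inspecting the definition, the second summand $(1-\delta+\delta p(S_B,\alpha_{s^c},S_G)+\delta p(S_G,\alpha_{s^c},S_B))(C_H-C_L)$ carries no dependence on $R$, while the first summand is the product of $(1-R)$ with a constant (in $R$) factor. Writing $h$ in the form $aR+b$, the slope is
\[
a = -\delta\bigl(p(s,A_H,s^c)-p(s,A_L,s^c)\bigr)(X_{s^c}-X_s),
\]
and $b$ absorbs everything else. Hence ``$h$ is linearly increasing in $R$'' amounts \emph{exactly} to the claim $a>0$, so the whole argument reduces to determining the sign of this single coefficient.

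First I would record that $\delta\in(0,1)$ is strictly positive, so the sign of $a$ is governed entirely by the product $\bigl(p(s,A_H,s^c)-p(s,A_L,s^c)\bigr)(X_{s^c}-X_s)$. I claim this product is strictly negative for each admissible $s\in\mathcal{S}_{GB}$, which yields $a>0$. The proof then splits into the two possible values of $s$. For $s=S_G$ (hence $s^c=S_B$), the loss ordering $0\le X_G<X_B$ gives $X_{s^c}-X_s=X_B-X_G>0$, while the transition hypothesis $p(S_G,A_H,S_B)<p(S_G,A_L,S_B)$ gives $p(s,A_H,s^c)-p(s,A_L,s^c)<0$; the product is negative. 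For $s=S_B$ (hence $s^c=S_G$), now $X_{s^c}-X_s=X_G-X_B<0$, while $p(S_B,A_L,S_G)<p(S_B,A_H,S_G)$ gives $p(s,A_H,s^c)-p(s,A_L,s^c)>0$; again the product is negative.

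Putting the two cases together, $a=-\delta\cdot(\text{negative quantity})>0$ in either case, so $h$ is a strictly increasing linear function of $R$, as asserted. I do not anticipate a genuine obstacle here: the statement is really a structural observation (affinity in $R$) followed by a two-line sign check. The only point demanding a little care is ensuring the case split on $s$ is exhaustive and that the correct transition inequality is invoked in each branch---the ``$A_H$ lowers the bad-state probability'' bullet for $s=S_G$ and the ``$A_H$ raises the good-state probability'' bullet for $s=S_B$. A careless pairing of the wrong inequality with the wrong case would flip a sign and spuriously reverse the monotonicity, so this matching is the one place worth writing out explicitly.
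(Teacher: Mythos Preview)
Your proposal is correct and follows essentially the same approach as the paper: identify the slope of the affine function $h$ in $R$ as $-\delta\bigl(p(s,A_H,s^c)-p(s,A_L,s^c)\bigr)(X_{s^c}-X_s)$, then verify it is positive by checking the sign of each factor in the two cases $s=S_G$ and $s=S_B$ using the standing assumptions on transition probabilities and losses. The paper's proof is the same sign check, written slightly more tersely.
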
 
\begin{proof}
We can see that $h(s,\alpha_{s^c},R)$ is linear on $R$ with a slope of $ \\ -\delta ( p(s,A_H,s^c)- p(s,A_L,s^c) ) \left( X_{s^c} - X_s \right) $. From the properties of protections and direct losses, we have $p(S_G,A_H,S_B)   -   p(S_G,A_L,S_B) < 0$, $X_B - X_G > 0$, $p(S_B,A_H,S_G)   -   p(S_B,A_L,S_G) > 0$, and $X_G - X_B < 0$. As a result, $ -\delta ( p(s,A_H,s^c)   -   p(s,A_L,s^c) ) ( X_{s^c} - X_s) > 0$ and $h(s,\alpha_{s^c},R)$ is linearly increasing on $R$. 
\end{proof}

Before we obtain the optimal protection policy $\pi_R^*$, we note the following proposition regarding the uniqueness of $\pi_R^*$. 
\begin{theorem}
\label{the:UserUniqueness}
The optimal protection policy $\pi_R^*$ is unique.
\end{theorem}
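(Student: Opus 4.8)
The plan is to convert the statement into a combinatorial claim about the sign-characterization of Proposition \ref{pro:UserH} and then close the single delicate case with the value theory of the underlying MDP. By Proposition \ref{pro:UserH}, $\pi_R^*$ is completely determined by the signs of the four numbers $h(S_G,A_L,R)$, $h(S_G,A_H,R)$, $h(S_B,A_L,R)$, $h(S_B,A_H,R)$, ties being resolved toward $A_L$ through the convention $h\geq0\Rightarrow A_L$. Hence $\pi_R^*$ is unique if and only if, for every $R\in[0,1]$, exactly one of the four itemized conditions is met, so I would establish (i) exhaustiveness and (ii) mutual exclusivity. Throughout I would exploit the additive structure of $h$: its definition shows $h(s,\alpha_{s^c},R)$ is the sum of an $(s,R)$-term and a term depending only on $\alpha_{s^c}$, so the difference $D:=h(s,A_L,R)-h(s,A_H,R)$ is the same for $s=S_G$ and $s=S_B$. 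Writing $u:=h(S_G,A_L,R)$ and $w:=h(S_B,A_L,R)$, the conditions read $\Pi_{LL}:(u\geq0,w\geq0)$, $\Pi_{LH}:(u-D\geq0,w<0)$, $\Pi_{HL}:(u<0,w-D\geq0)$, $\Pi_{HH}:(u-D<0,w-D<0)$.

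Exhaustiveness is the easy half: problem (\ref{eq:UserObj}) is a finite discounted MDP, so an optimal stationary policy exists \cite{filar2012competitive}, and, being greedy for its own value, it is unimproved by any single-state deviation; by Lemma \ref{lem:UserVKBSol2} this is exactly the inequality system defining its own type, so at least one of the four conditions holds. For exclusivity I would first dispose of the four \emph{adjacent} pairs, whose policies differ in a single state: these are incompatible on signs alone, since, e.g., $\Pi_{LL}$ and $\Pi_{LH}$ demand $w\geq0$ and $w<0$, and each of the other three adjacent pairs likewise forces opposite signs on one shared quantity. Consequently, if two distinct conditions held simultaneously they would have to form a \emph{diagonal} pair, $\{\Pi_{LL},\Pi_{HH}\}$ or $\{\Pi_{LH},\Pi_{HL}\}$, whose policies differ in both states; inspecting the inequalities, these are sign-compatible only when $D>0$, respectively $D<0$, i.e. only when $D\neq0$.

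The diagonal case is the crux, and I would settle it with value theory rather than signs. By Lemma \ref{lem:UserVKBSol2} each condition asserts that its policy cannot be improved by changing the action at a single state; because a one-state change perturbs the policy value by a strictly positive multiple of the corresponding one-step advantage (the multiplier being positive since the denominator $I_p(\alpha_G,\alpha_B)$ of (\ref{eq:UserWidehatJSG})--(\ref{eq:UserWidehatJSB}) is positive, cf. (\ref{eq:UserIp})), such a policy is greedy for its own value and hence optimal. Were two policies to satisfy their conditions, both would attain the common optimal value $V^*$; since a diagonal pair differs in both states, both $A_H$ and $A_L$ would be greedy under $V^*$ at each state, forcing the ties $Q^*(S_G,A_H)=Q^*(S_G,A_L)$ and $Q^*(S_B,A_H)=Q^*(S_B,A_L)$. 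Propagating these equalities through (\ref{eq:UserWidehatJSG})--(\ref{eq:UserWidehatJSB}) forces the two $(s,R)$-terms of $h$ to coincide at this $R$ and to equal the negative of a strictly positive cost term of $h$. By the $(1-R)$ form recorded in Proposition \ref{pro:UserHMonotonicity}, these terms are scalar multiples of $(1-R)$ and so vanish at $R=1$; a strictly negative common value therefore excludes $R=1$, and for $R<1$ two multiples of the nonzero factor $(1-R)$ agree only if their coefficients agree, which a short computation identifies with $D=0$. This contradicts the strict $D\neq0$ that made the diagonal pair sign-compatible in the first place.

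The main obstacle is exactly this diagonal step: the four conditions are \emph{not} a clean sign-partition, because when $D\neq0$ the acceptance regions of a diagonal pair overlap on a small square of the $(u,w)$-plane, and only the MDP value argument above shows that the realized values $(u,w)=\bigl(h(S_G,A_L,R),h(S_B,A_L,R)\bigr)$ never land in that square. With exhaustiveness and this exclusivity in hand, exactly one condition holds for every $R$, so $\pi_R^*$ is well defined and unique. If one is content with uniqueness under the stated tie-breaking rather than with an explicit partition, a lighter route bypasses the $h$-analysis entirely: uniqueness of $V^*$ for the finite discounted MDP makes the greedy action at each state a single $\arg\min$ over $\{A_H,A_L\}$, rendered unique by the $A_L$ convention.
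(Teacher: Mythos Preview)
Your proposal is correct and handles the adjacent pairs exactly as the paper does, but your treatment of the two diagonal pairs is genuinely different. The paper proceeds by pure algebra: it records the four difference identities (\ref{eq:AppInvariability})--(\ref{eq:AppVariability}) linking the $h$-values through $\rho$, and then, assuming a diagonal pair of conditions holds simultaneously, extracts the inequality $C_H-C_L>(1-R)(X_B-X_G)$ and substitutes it back into the explicit formula for $h$ to force a sign contradiction (e.g.\ $h(S_G,A_L,R)>0$ where $<0$ is required). Your route instead invokes the value theory of the underlying discounted MDP: each Proposition~\ref{pro:UserH} condition makes its policy unimprovable by single-state flips, hence optimal; two optimal policies share $V^*$, so a diagonal pair forces $Q$-ties at both states and hence all four $h$-values vanish, contradicting the strict sign demanded by the $\Pi_{HH}$ (or $\Pi_{LH}$) condition. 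Your approach is more conceptual and would transfer to larger state/action sets without recomputation, while the paper's algebra is self-contained and feeds the explicit threshold formulas used downstream in Proposition~\ref{pro:UserSwitchingCoverageLevel}. One minor remark: your detour through the $(s,R)$-terms and the $(1-R)$ factor to reach $D=0$ is longer than needed---once all $h$-values vanish you already contradict the strict inequality $h(S_B,A_H,R)<0$ directly. Your closing ``lighter route'' (uniqueness of $V^*$ plus the $A_L$ tie-break) is in fact the shortest valid proof and bypasses Lemma~\ref{lem:UserVKBSol2} and Proposition~\ref{pro:UserH} entirely.
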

\begin{proof}
See Appendix B.
\end{proof}

With Lemma \ref{lem:UserVKBSol2}, Proposition \ref{pro:UserHMonotonicity}, and Theorem \ref{the:UserUniqueness}, we can obtain the optimal protection policies of the user with respect to the coverage level as stated in the following proposition. 
\begin{proposition}
\label{pro:UserSwitchingCoverageLevel}
Let us define the value of transition probabilities as 
\begin{equation}
\label{eq:UserValueTransitionProbabilities1}
\begin{array}{l}
\rho  = p(S_B,A_H,S_G) + p(S_G,A_H,S_B)  -  p(S_B,A_L,S_G)  - p(S_G,A_L,S_B).
\end{array}
\end{equation}
The user's optimal protection policies with respect to the insurer's coverage level can be summarized with the following cases as also shown in Fig. \ref{fig:NetworkUserAttackerInsurer}. 

\noindent\textbf{Case 1:} If $h(S_G,A_L,0)\geq 0$ and $h(S_B,A_L,0)\geq  0$, the optimal protection policies $\pi_R^* = \Pi_{LL}$ for $R\in [0,1]$. 

\noindent\textbf{Case 2:} If $h(S_G,A_L,0)<0$ and $h(S_B,A_H,0)\geq 0$, we have $\rho < 0$ in this case. The optimal protection policies $\pi_R^* = \Pi_{HL}$ for $R\in [0,R_G)$ and $\pi_R^* = \Pi_{LL}$ for $R\in [R_G, 1]$, where
\[\begin{array}{l}
R_G =  1 - \frac{  (1-\delta + \delta p(S_B,A_L,S_G)   +   \delta p(S_G,A_L,S_B) )( C_H - C_L  )    }{\delta( p(S_G,A_L,S_B)   -   p(S_G,A_H,S_B) ) ( X_B - X_G)} .
\end{array} \]

\noindent\textbf{Case 3:} If $h(S_G,A_H,0 ) \geq 0$ and $h(S_B,A_L,0) < 0$, we have $\rho > 0$  in this case. The optimal protection policies $\pi_R^* = \Pi_{LH}$ for $R\in [0,R_B)$ and $\pi_R^* = \Pi_{LL}$ for $R\in [R_B, 1]$, where
\[\begin{array}{l}
R_B = 1 - \frac{ (1-\delta + \delta p(S_G,A_L,S_B)   +   \delta p(S_B,A_L,S_G)  )( C_H - C_L  ) }{\delta( p(S_B,A_H,S_G)   -   p(S_B,A_L,S_G) ) ( X_B - X_G) }.
\end{array}\] 

\noindent\textbf{Case 4:} If $h(S_G,A_H,0 )  < 0$ and $h(S_B,A_H,0) < 0$, 
\begin{itemize}
\item Case 4(a): If $\rho < 0$, $\pi_R^* = \Pi_{HH}$ for $R\in [0,R_B)$, $\pi_R^* = \Pi_{HL}$ for $R\in [R_B,R_G)$, and $\pi_R^* = \Pi_{LL}$ for $R\in [R_G, 1]$, where
\[\begin{array}{l}
R_G =  1 - \frac{  (1-\delta + \delta p(S_B,A_L,S_G)   +   \delta p(S_G,A_L,S_B) )( C_H - C_L  )    }{\delta( p(S_G,A_L,S_B)   -   p(S_G,A_H,S_B) ) ( X_B - X_G)} , \end{array}  \]
\[\begin{array}{l}
 R_B = 1 - \frac{ (1-\delta + \delta p(S_G,A_H,S_B)   +   \delta p(S_B,A_H,S_G)  )( C_H - C_L  ) }{\delta( p(S_B,A_H,S_G)   -   p(S_B,A_L,S_G) ) ( X_B - X_G) } .
\end{array} \]
\item Case 4(b): If $\rho > 0$, $\pi_R^* = \Pi_{HH}$ for $R\in [0,R_G)$, $\pi_R^* = \Pi_{LH}$ for $R\in [R_G,R_B)$, and $\pi_R^* = \Pi_{LL}$ for $R\in [R_B, 1]$, where
\[\begin{array}{l}
R_G =  1 - \frac{  (1-\delta + \delta p(S_B,A_H,S_G)   +   \delta p(S_G,A_H,S_B) )( C_H - C_L  )    }{\delta( p(S_G,A_L,S_B)   -   p(S_G,A_H,S_B) ) ( X_B - X_G)} ,
\end{array} \]
\[\begin{array}{l}
R_B = 1 - \frac{ (1-\delta + \delta p(S_G,A_L,S_B)   +   \delta p(S_B,A_L,S_G)  )( C_H - C_L  ) }{\delta( p(S_B,A_H,S_G)   -   p(S_B,A_L,S_G) ) ( X_B - X_G) }.
\end{array} \]
\item Case 4(c): If $\rho = 0$, $\pi_R^* = \Pi_{HH}$ for $R\in [0, R_s)$ and $\pi_R^* = \Pi_{LL}$ for $R\in [R_s, 1]$, where
\[\begin{array}{l}
R_s = R_G  =  1 - \frac{  (1-\delta + \delta p(S_B,A_H,S_G)   +   \delta p(S_G,A_H,S_B) )( C_H - C_L  )    }{\delta( p(S_G,A_L,S_B)   -   p(S_G,A_H,S_B) ) ( X_B - X_G)}
\\ = R_B =  1 - \frac{ (1-\delta + \delta p(S_G,A_H,S_B)   +   \delta p(S_B,A_H,S_G)  )( C_H - C_L  ) }{\delta( p(S_B,A_H,S_G)   -   p(S_B,A_L,S_G) ) ( X_B - X_G) } 
\end{array}  \]
\end{itemize}
\end{proposition}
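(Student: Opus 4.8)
The plan is to reduce the whole statement to tracking the signs of the four quantities $h(S_G,A_L,R)$, $h(S_G,A_H,R)$, $h(S_B,A_L,R)$, $h(S_B,A_H,R)$ as $R$ sweeps $[0,1]$, and then to read off $\pi_R^*$ from Proposition \ref{pro:UserH}. Two structural facts make this tractable. First, by Proposition \ref{pro:UserHMonotonicity} each of the four functions is linear and strictly increasing in $R$; evaluating the definition of $h$ at $R=1$ kills the coverage term and leaves a strictly positive cost term, so each function is positive at $R=1$. Hence a function nonnegative at $R=0$ stays nonnegative on $[0,1]$, while a function negative at $R=0$ has a unique root in $(0,1)$. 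Second, substituting the definition of $h$ from Proposition \ref{pro:UserH} together with the definition of $\rho$ in (\ref{eq:UserValueTransitionProbabilities1}) yields the two identities
\[ h(s,A_H,R)-h(s,A_L,R)=\delta (C_H-C_L)\rho, \qquad s\in\mathcal{S}_{GB}, \]
\[ h(S_G,\alpha,R)-h(S_B,\alpha,R)=(1-R)\delta (X_B-X_G)\rho, \qquad \alpha\in\mathcal{A}_{HL}. \]
The first shows that swapping the neighbour's action shifts $h$ by a constant whose sign is that of $\rho$; the second shows the two states' $h$-values differ by a quantity whose sign is that of $\rho$ and which vanishes at $R=1$. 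Thus the sign of $\rho$ fully dictates the relative vertical ordering of the four increasing lines.

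I would then organise the argument by the optimal policy at $R=0$. Since Proposition \ref{pro:UserH} partitions the parameter space and Theorem \ref{the:UserUniqueness} makes $\pi_R^*$ unique, exactly one of $\Pi_{LL},\Pi_{HL},\Pi_{LH},\Pi_{HH}$ is optimal at $R=0$, and these four alternatives are precisely the hypotheses of Cases 1--4, so the case split is exhaustive. Case 1 is immediate: both $h(S_G,A_L,\cdot)$ and $h(S_B,A_L,\cdot)$ start nonnegative and hence remain so, giving $\Pi_{LL}$ throughout. For Case 2 I would first deduce $\rho<0$ by contradiction: assuming $\rho\ge 0$, the cross-state identity turns $h(S_B,A_H,0)\ge 0$ into $h(S_G,A_H,0)\ge 0$ and turns $h(S_G,A_L,0)<0$ into $h(S_B,A_L,0)<0$, i.e.\ the conditions for $\Pi_{LH}$ hold simultaneously with those for $\Pi_{HL}$, contradicting uniqueness. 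With $\rho<0$ the first identity makes $h(S_B,A_L,\cdot)$ exceed $h(S_B,A_H,\cdot)$, so both bad-state functions stay nonnegative, while $h(S_G,A_L,\cdot)$ climbs from negative through a single root $R_G$; Proposition \ref{pro:UserH} then reads $\Pi_{HL}$ on $[0,R_G)$ and $\Pi_{LL}$ on $[R_G,1]$, and solving $h(S_G,A_L,R)=0$ gives the stated $R_G$. Case 3 is the mirror image under $S_G\leftrightarrow S_B$, forcing $\rho>0$ and a single switch at the root $R_B$ of $h(S_B,A_L,R)=0$.

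Case 4 is where the work concentrates, because the starting policy $\Pi_{HH}$ is compatible with either sign of $\rho$ and the intermediate policy differs accordingly, so I would split on $\rho$. When $\rho<0$ the identities put $h(S_B,A_H,\cdot)$ above $h(S_G,A_H,\cdot)$, so $h(S_B,A_H,\cdot)$ reaches zero first at $R_B$, moving the system out of $\Pi_{HH}$; I then show it enters $\Pi_{HL}$ rather than jumping straight to $\Pi_{LL}$, and that $h(S_G,A_L,\cdot)$ crosses zero later at $R_G$, so $\Pi_{HL}$ fills the nonempty interval $[R_B,R_G)$. The case $\rho>0$ is analogous with $h(S_G,A_H,\cdot)$ crossing first, giving $\Pi_{HH}\to\Pi_{LH}\to\Pi_{LL}$; and $\rho=0$ collapses both identities so all four functions coincide, producing a single switch $R_s$ at which the two displayed formulas for $R_G$ and $R_B$ agree (since $\rho=0$ equates the numerators and the denominators).

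The main obstacle I anticipate is exactly the threshold ordering inside Case 4(a): after $h(S_B,A_H,\cdot)$ vanishes at $R_B$ I must rule out $\Pi_{LL}$, i.e.\ prove $h(S_G,A_L,R_B)<0$ strictly, so that the intermediate $\Pi_{HL}$ region is genuinely nonempty and $R_B<R_G$. Feeding the closed form of $R_B$ (the root of $h(S_B,A_H,R)=0$) back into $h(S_G,A_L,R_B)$ reduces this to the elementary inequality $1-\delta+\delta p(S_G,A_H,S_B)+\delta p(S_B,A_L,S_G)>0$, which holds since $\delta\in(0,1)$ and transition probabilities are nonnegative; the symmetric check (which turns out to be immediate) closes Case 4(b). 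Throughout, the uniqueness from Theorem \ref{the:UserUniqueness} is what licenses reading the policy off a sign pattern via Proposition \ref{pro:UserH} and what powers the $\rho$-sign contradictions in Cases 2 and 3.
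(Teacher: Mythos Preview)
Your proposal is correct and follows essentially the same strategy as the paper's proof in Appendix~C: both track the signs of $h(s,\alpha_{s^c},R)$ using the linear monotonicity (Proposition~\ref{pro:UserHMonotonicity}), the positivity at $R=1$, and the $\rho$-identities relating the four $h$-functions (your two displayed identities are the paper's equations (\ref{eq:AppInvariability}) and (\ref{eq:AppMonotonicityI})).

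The one organizational difference worth noting is how the sign of $\rho$ is pinned down in Cases~2 and~3. The paper works in the reverse direction: it first proves the standalone implications ``$\rho<0\Rightarrow\Pi_{LH}$ never optimal'' and ``$\rho>0\Rightarrow\Pi_{HL}$ never optimal'' (the latter requiring the nontrivial computation (\ref{eq:AppHGLR}) recycled from Appendix~B), and then infers $\rho<0$ in Case~2 by elimination. You instead assume $\rho\ge 0$, push the Case~2 hypotheses through the cross-state identity to obtain the $\Pi_{LH}$ conditions, and invoke Theorem~\ref{the:UserUniqueness} for a contradiction. Your route is cleaner because it reuses the uniqueness theorem rather than re-deriving the inequality that already underlies its proof. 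In Case~4 your explicit verification that $h(S_G,A_L,R_B)<0$ (hence $R_B<R_G$) via the reduction to $1-\delta+\delta p(S_G,A_H,S_B)+\delta p(S_B,A_L,S_G)>0$ is correct and in fact fills in a step the paper's ``Case~4(a) holds from Case~2'' leaves implicit.
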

\begin{proof}
See Appendix C. 
\end{proof}

\begin{figure}[http]
\centering
\subfigure[Case 1]{
\includegraphics[width=0.3\textwidth]{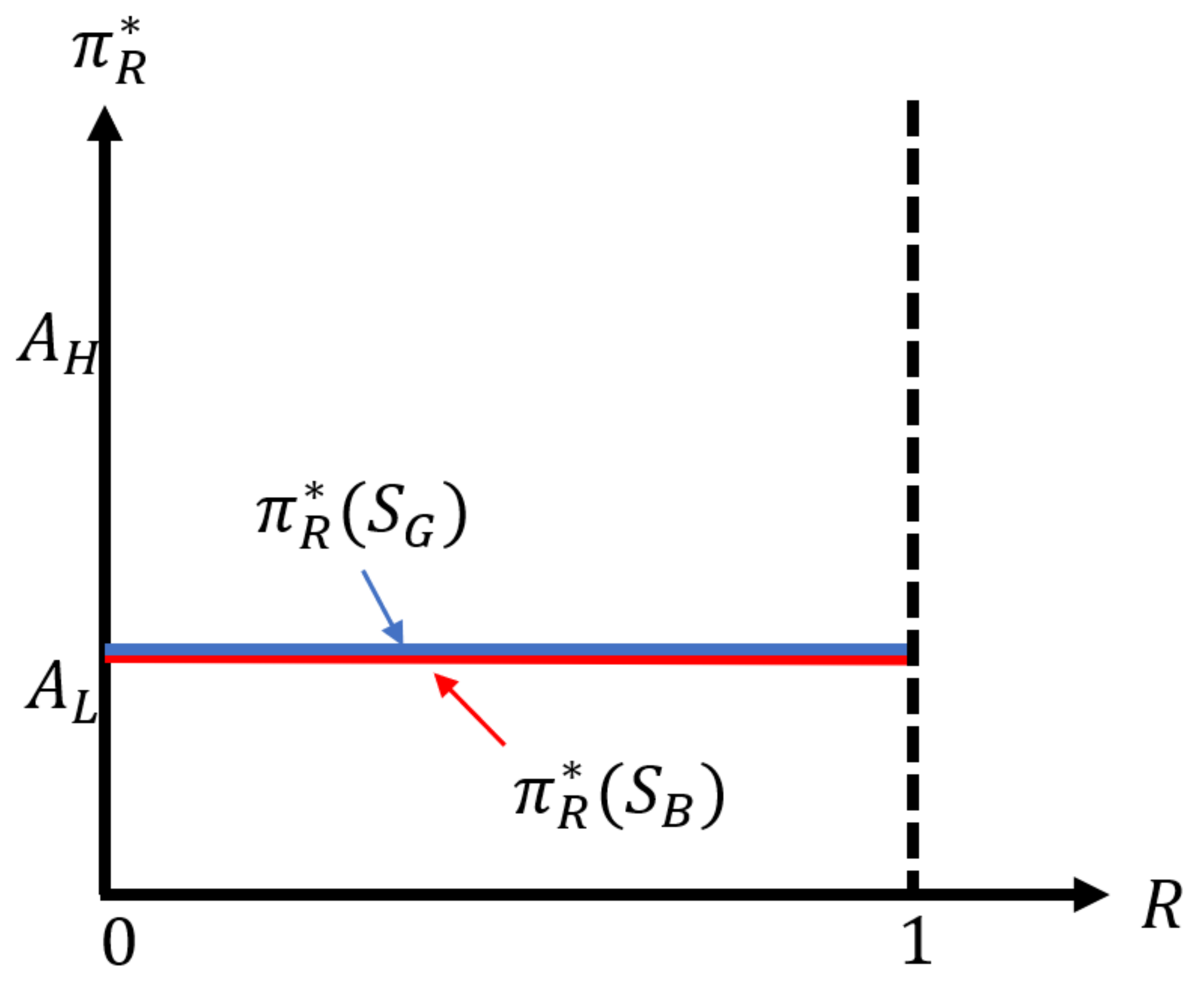}}
\subfigure[Case 2]{
\includegraphics[width=0.3\textwidth]{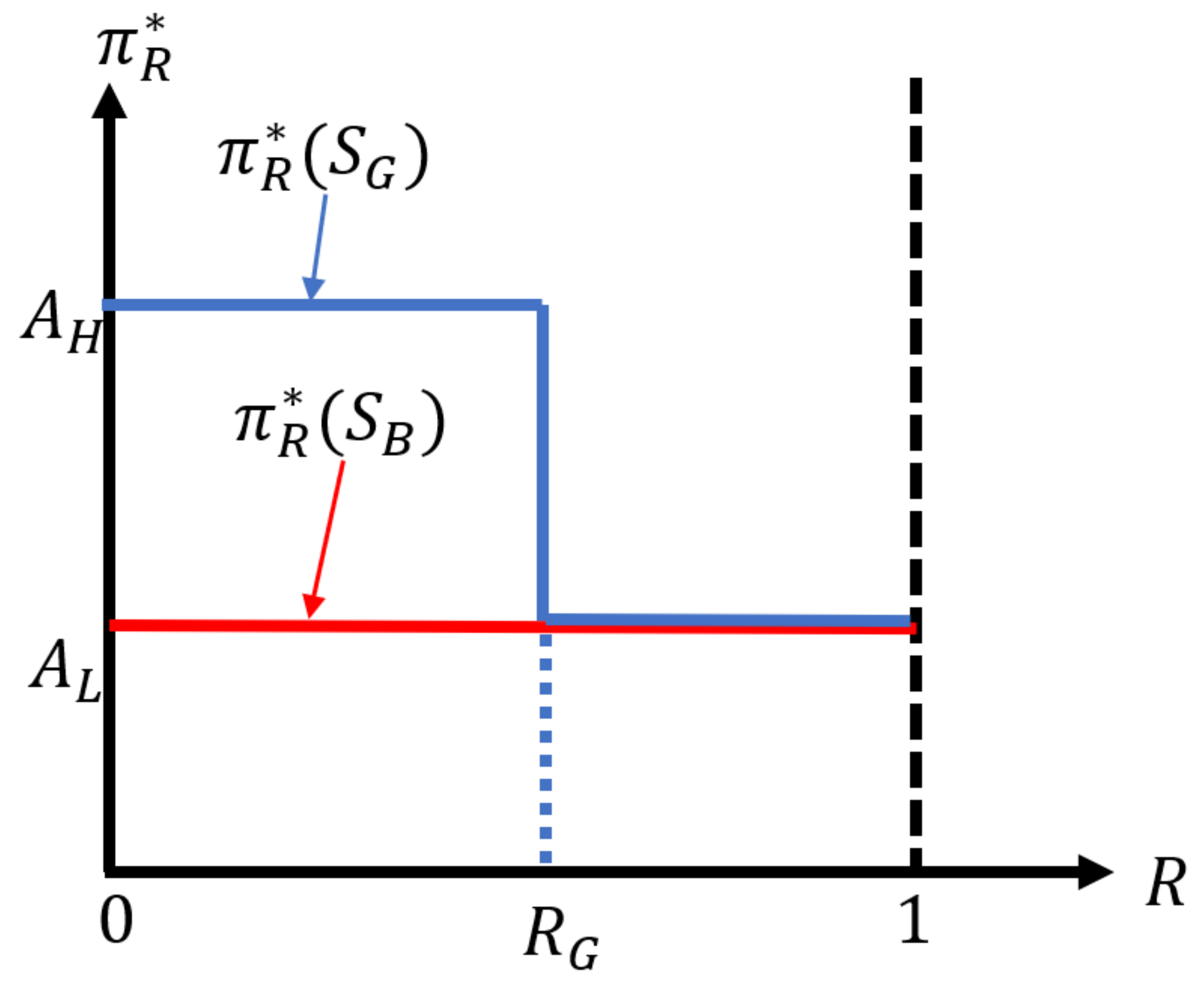}}
\subfigure[Case 3]{
\includegraphics[width=0.3\textwidth]{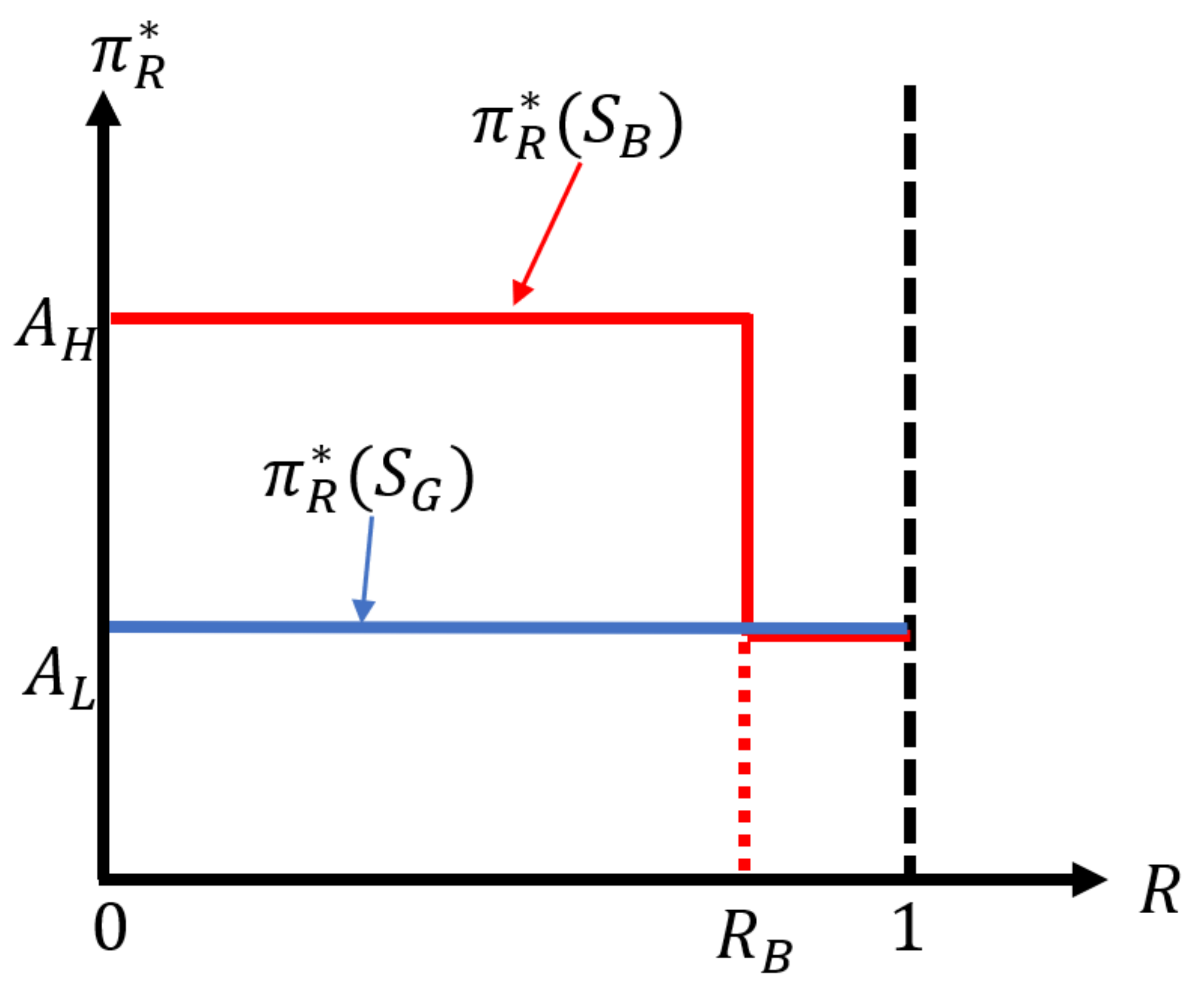}}
\subfigure[Case 4(a)]{
\includegraphics[width=0.3\textwidth]{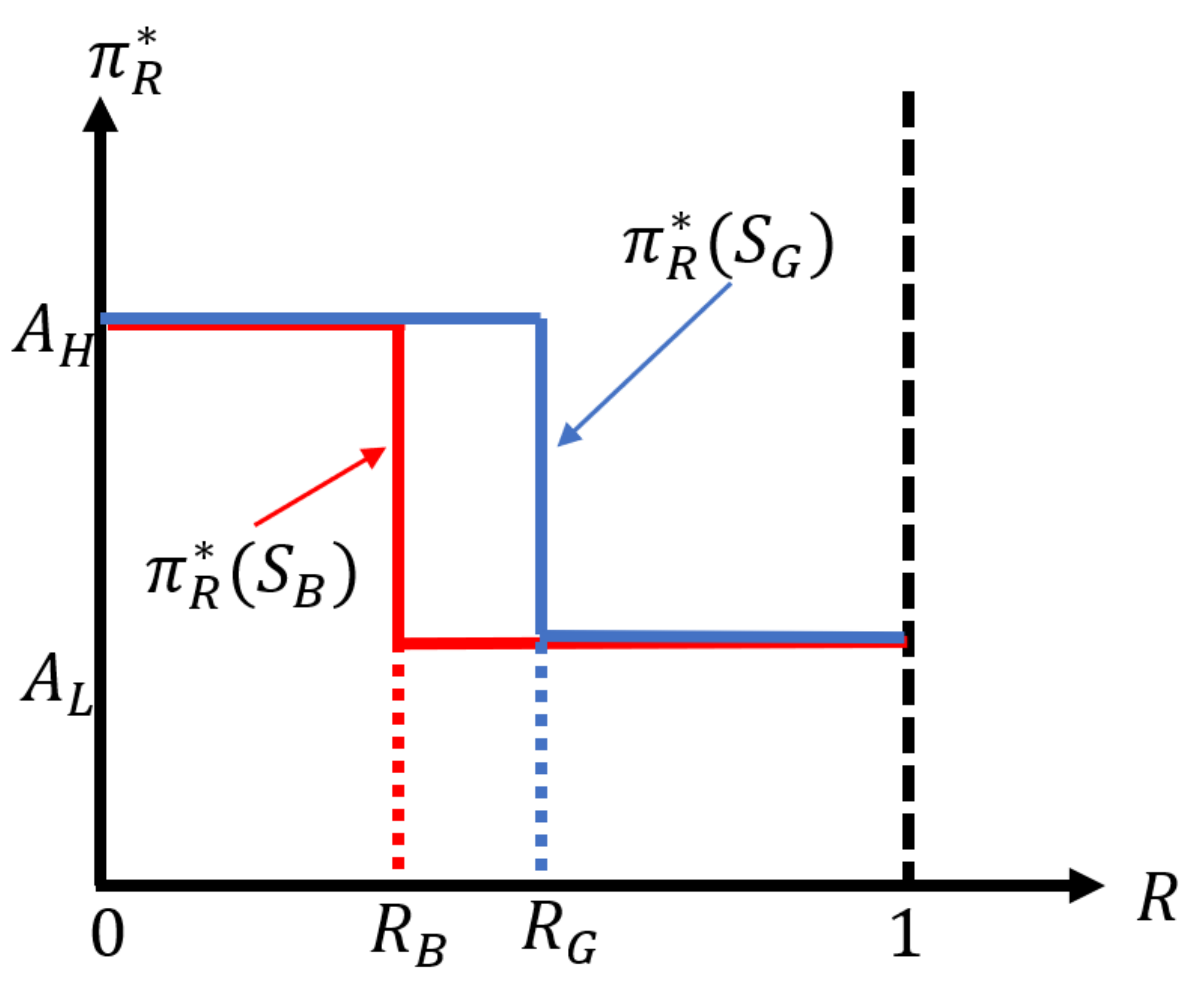}}
\subfigure[Case 4(b)]{
\includegraphics[width=0.3\textwidth]{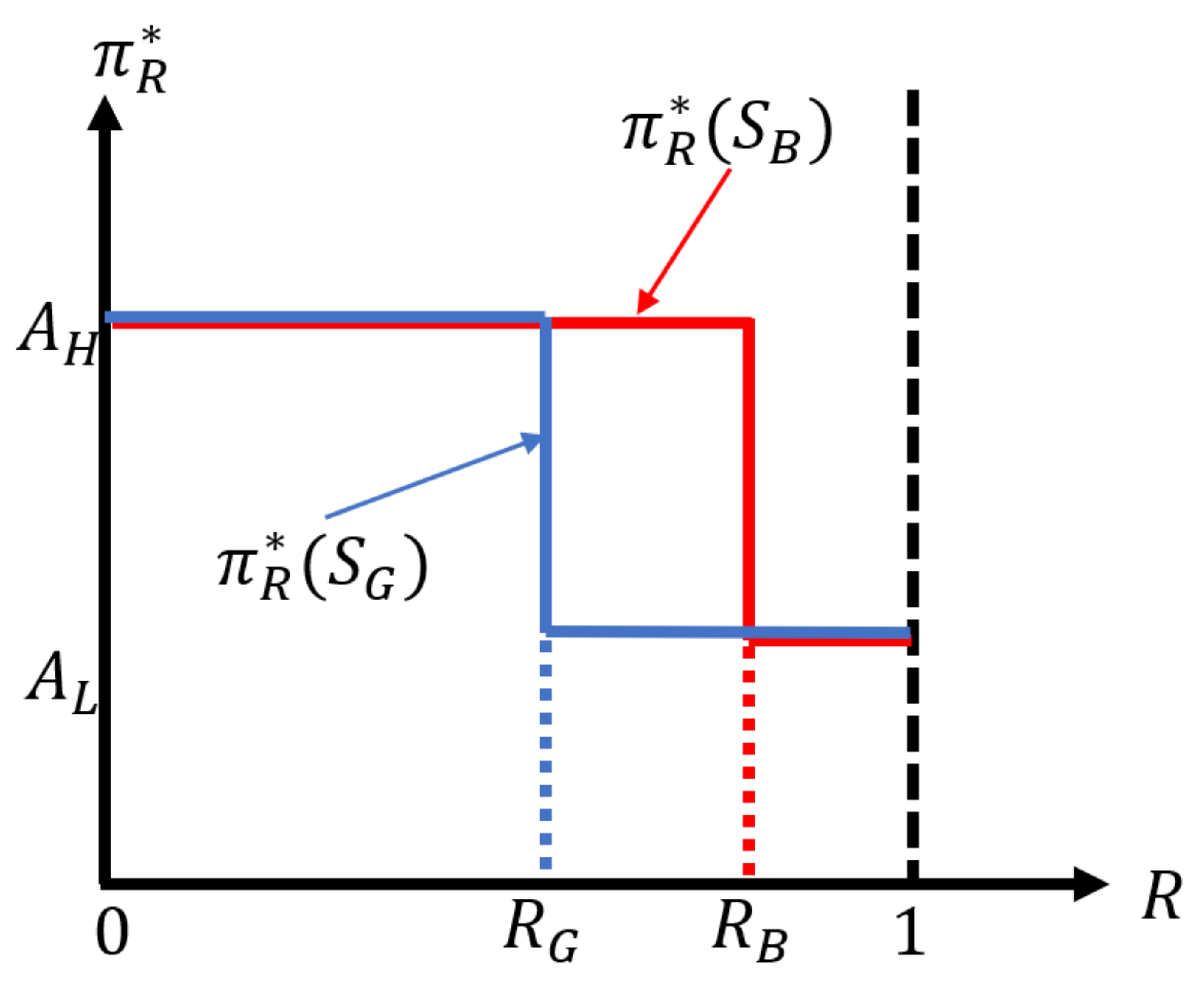}}
\subfigure[Case 4(c)]{
\includegraphics[width=0.3\textwidth]{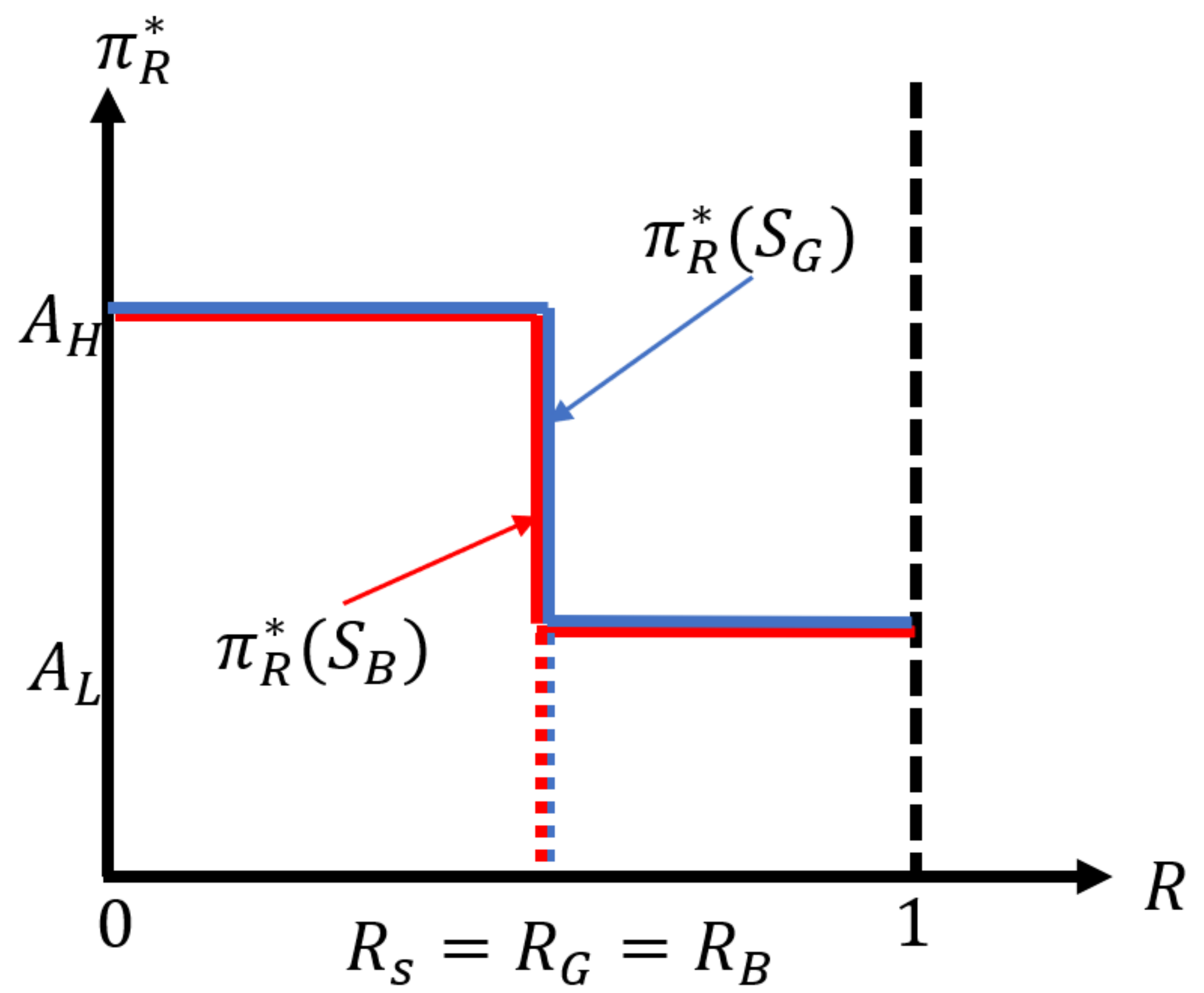}}
\caption{All possible cases of the user's optimal protection policies with respect to the insurer's coverage level. A detailed discussion is provided in Proposition \ref{pro:UserSwitchingCoverageLevel}.}
\label{fig:NetworkUserAttackerInsurer}
\end{figure}

We can see from Proposition \ref{pro:UserSwitchingCoverageLevel} that the user tends to take weak protections with the increase of the coverage level in all cases, and this reckless behavior is often referred as the risk compensation \cite{ewold1991insurance}. One critical impact of the risk compensation is the Peltzman effect as shown in the following theorem. 
\begin{theorem}
\label{the:UserPeltzmanEffect}
\textbf{Peltzman effect:} The user faces higher cyber risks under cyber-insurance. Such phenomena exists in the following cases:
\begin{itemize}
\item $R\in [R_G, 1]$ in Case 2 and Case 4(b);
\item $R\in [R_B, 1]$ in Case 3 and Case 4(a);
\item $R\in [R_s, 1]$ in Case 4(c). 
\end{itemize}
\end{theorem}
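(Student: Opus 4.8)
The plan is to make precise the notion of ``cyber risk'' and then reduce the statement to a monotonicity property of a two-state Markov chain. I would measure the user's exposure by the expected cumulative discounted \emph{direct} loss $D(s_0,\pi)=\mathbb{E}\{\sum_{t=0}^\infty \delta^t x_t\mid s_0\}$ under the adopted policy $\pi$, i.e.\ the coverage-free, cost-free quantity that reflects only how often the chain sits in the costly state $S_B$. The first observation is purely bookkeeping: reading off Proposition \ref{pro:UserSwitchingCoverageLevel}, on each interval listed in the theorem the optimal policy $\pi_R^*$ is obtained from the no-insurance policy $\pi_0^*$ by switching at least one state from the strong protection $A_H$ to the weak protection $A_L$, and never in the opposite direction. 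For instance, in Case 2 on $R\in[R_G,1]$ we have $\pi_0^*=\Pi_{HL}$ and $\pi_R^*=\Pi_{LL}$ (good state weakened); in Case 3 on $[R_B,1]$ we have $\pi_0^*=\Pi_{LH}$ and $\pi_R^*=\Pi_{LL}$ (bad state weakened); and in Cases 4(a)--4(c) we have $\pi_0^*=\Pi_{HH}$ while $\pi_R^*\in\{\Pi_{HL},\Pi_{LH},\Pi_{LL}\}$ on the stated ranges. Thus it suffices to prove that weakening protection strictly increases $D$.

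The core is a monotonicity lemma, which I would prove by the same linear solve used for (\ref{eq:UserWidehatJSG})--(\ref{eq:UserWidehatJSB}). Writing $p_{GB}=p(S_G,\alpha_G,S_B)$ and $p_{BG}=p(S_B,\alpha_B,S_G)$, the values $D_G=D(S_G,\pi)$ and $D_B=D(S_B,\pi)$ satisfy $D_G=X_G+\delta[(1-p_{GB})D_G+p_{GB}D_B]$ and $D_B=X_B+\delta[p_{BG}D_G+(1-p_{BG})D_B]$. Subtracting yields the closed form $D_B-D_G=(X_B-X_G)/\bigl(1-\delta+\delta(p_{GB}+p_{BG})\bigr)>0$, and back-substitution gives $D_G=\bigl(X_G+\delta p_{GB}(D_B-D_G)\bigr)/(1-\delta)$. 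Differentiating these expressions (the denominators are bounded below by $1-\delta>0$, so everything is smooth) shows $\partial D_G/\partial p_{GB}>0$, $\partial D_B/\partial p_{GB}>0$, $\partial D_G/\partial p_{BG}<0$, and $\partial D_B/\partial p_{BG}<0$, using only $X_B>X_G$ and $\delta\in(0,1)$. In words: raising the good-to-bad transition probability or lowering the bad-to-good transition probability strictly increases the discounted loss from either starting state.

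With the lemma in hand the theorem follows case by case. By the properties of protections, $A_H\to A_L$ at the good state raises $p_{GB}$ (from $p(S_G,A_H,S_B)$ to $p(S_G,A_L,S_B)$), while $A_H\to A_L$ at the bad state lowers $p_{BG}$ (from $p(S_B,A_H,S_G)$ to $p(S_B,A_L,S_G)$). Hence every weakening identified in the first paragraph moves $(p_{GB},p_{BG})$ in a direction along which the lemma guarantees a strict increase of $D$, so on each listed interval $D(s_0,\pi_R^*)>D(s_0,\pi_0^*)$ for both initial states; the user's true cyber risk under insurance strictly exceeds the no-insurance risk, which is precisely the Peltzman effect.

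The main obstacle is the monotonicity lemma: the dependence of $D_G,D_B$ on $p_{GB},p_{BG}$ is nonlinear because the gap $D_B-D_G$ itself depends on both probabilities, so the signs must be extracted carefully rather than read off. I expect the cleanest route is to substitute the closed form for $D_B-D_G$ and reduce each derivative to a ratio of the form $\partial_w\bigl(w/(w+\mathrm{const})\bigr)$, whose sign is immediate. I would also record the slicker alternative: if ``cyber risk'' is instead measured by the coverage-free effective loss $V(s_0,\pi,r_0)$, then the conclusion is immediate from optimality and uniqueness, since $\pi_{r_0}^*$ minimizes $V(\cdot,\pi,r_0)$ over $\pi$, so any $\pi_R^*\neq\pi_{r_0}^*$ — which holds on every listed interval — gives $V(s_0,\pi_R^*,r_0)>V(s_0,\pi_{r_0}^*,r_0)$ by Theorem \ref{the:UserUniqueness}; I would note, however, that this quantity folds protection cost into ``risk,'' whereas $D$ isolates the genuine loss exposure.
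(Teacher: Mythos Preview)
Your proposal is correct, but it takes a genuinely different route from the paper. You establish a direct monotonicity lemma: solving the two linear equations for $D_G,D_B$, computing the closed form for $D_B-D_G$, and then checking the signs of the four partials in $(p_{GB},p_{BG})$. This is sound (the signs you claim do come out as stated once one substitutes the closed form for $D_B-D_G$), and since each partial has constant sign on the whole parameter rectangle, the discrete $A_H\to A_L$ comparisons follow. The paper instead uses a two-line optimality argument: writing $V(s,\pi,0)=V_d(s,\pi)+V_c(s,\pi)$ with $V_d$ your $D$ and $V_c$ the discounted protection cost, optimality of $\pi_0^*$ gives $V_d(s,\pi_0^*)+V_c(s,\pi_0^*)\le V_d(s,\pi_R^*)+V_c(s,\pi_R^*)$, and since each listed weakening strictly lowers the per-step cost one has $V_c(s,\pi_0^*)>V_c(s,\pi_R^*)$, whence $V_d(s,\pi_0^*)<V_d(s,\pi_R^*)$. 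So the ``slicker alternative'' you sketch at the end is close in spirit but stops short: the paper \emph{does} isolate the genuine direct-loss exposure $D$ by subtracting off the cost piece, rather than settling for the cost-inclusive $V(\cdot,\cdot,r_0)$. The paper's argument is shorter and avoids any explicit solve; your approach is more computational but yields a stand-alone monotonicity statement in the transition probabilities, which is independently informative and does not rely on $\pi_0^*$ being optimal.
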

\begin{proof}
We only need to prove that the user has higher expected cumulative direct losses in these cases. Let 
$V_d(s, \pi)$ and $V_c(\pi)$ denote the expected cumulative direct losses and the expected cumulative costs given the initial state $s$ and the protection policy $\pi$. We have that $V_d(s, \pi)+ V_c(\pi)= V(s, \pi, 0)$. Recall that the optimal protection policy $\pi_0^*$ without insurance has $V(s,\pi_0^*, 0) \leq V(s, \pi, 0)$ for $\pi \in \Omega$, thus, when the user has a different optimal protection policy $\pi_R^* \neq \pi_0^*$ given the coverage level $R$, we have $V(s,\pi_0^*, 0) \leq V(s, \pi_R^*, 0)$. As a result, we can achieve $V_d(s, \pi_0^*)+ V_c(\pi_0^*) \leq V_d(s, \pi_R^*)+ V_c(\pi_R^*)$.  Note that $ V_c(\pi_0^*) > V_c(\pi_R^*)$ in these cases as $V_c(\Pi_{HH}) > V_c(\Pi_{HL}) > V_c(\Pi_{LL})$ and $V_c(\Pi_{HH}) > V_c(\Pi_{LH}) > V_c(\Pi_{LL})$ from $C_H > C_L$. Thus, we have $V_d(s, \pi_0^*) < V_d(s, \pi_R^*)$ and the user faces higher cyber risks. 
\end{proof}

\subsection{Optimal Insurance Contract}
\label{subsec:OIP}
Recall Section \ref{sec:Insurer}, the insurer's problem (\ref{eq:InsurerObjRe}) in this case can be written as follows:
\begin{equation}
\label{eq:InsurerPolicy}
\begin{array}{l}
\max\limits_{R\in  [0,1]} \tau(R)= V(S_G,\pi_{0}^*,0) - V(S_G,\pi_R^*,0)
\\
\text{s.t.} \ \ \tau(R) \geq 0,
\end{array}
\end{equation}
where $\tau(R)$ denotes the operating profit of the insurer if he provides a coverage level of $R$. Note that $S_G$ in $\tau(R)$ indicates that the initial state of the user is the good state. 
\begin{proposition}
\label{pro:InsurerOptimal}
The optimal insurance contract  $\{K^*,R^*\}$ for each case in Proposition \ref{pro:UserSwitchingCoverageLevel} can be summarized as follows:
\begin{itemize}
\item Case 1: $R^* \in [0, 1]$ and $K^* = R^* k(S_G,A_L;A_L)$,
\item Case 2: $R^*\in [0, R_G]$ and $K^* = R^* k(S_G,A_H;A_L)$, 
\item Case 3: $R^*\in [0, R_B]$ and $K^* = R^* k(S_G,A_L;A_H)$, 
\item Case 4(a): $R^*\in [0, R_B]$ and $K^* = R^* k(S_G,A_H;A_H)$,
\item Case 4(b): $R^*\in [0, R_G]$ and $K^* = R^* k(S_G,A_H;A_H)$,
\item Case 4(c): $R^*\in [0, R_s]$ and $K^* = R^* k(S_G,A_H;A_H)$,
\end{itemize}
where
\begin{equation}
\label{eq:UserKSG}
\begin{array}{l}
k(S_G,\alpha_{G};\alpha_{B}) =  \frac{ ( 1-\delta {p}(S_B,\alpha_{B},  S_B) )X_G  +   \delta {p}(S_G,\alpha_{G},S_B)   X_B }{I_p(\alpha_G, \alpha_B)}.
\end{array}
\end{equation}
For all the cases, the operating profit under the optimal insurance contract is $0$, i.e., $\tau(R^*) = 0$.
\end{proposition}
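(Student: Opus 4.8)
The plan is to read the result off the zero-operating-profit principle (Proposition \ref{pro:InsurerOptimalGeneral}) together with the explicit policy map of Proposition \ref{pro:UserSwitchingCoverageLevel}. The central observation is that the insurer's objective $\tau(R)=V(S_G,\pi_0^*,0)-V(S_G,\pi_R^*,0)$ can never be positive: because $\pi_0^*$ minimizes $V(S_G,\cdot,0)$ over $\Omega$ by definition, every policy satisfies $V(S_G,\pi_R^*,0)\geq V(S_G,\pi_0^*,0)$, hence $\tau(R)\leq 0$ for all $R\in[0,1]$. The feasibility constraint $\tau(R)\geq 0$ in (\ref{eq:InsurerPolicy}) then pins the optimal value to $\tau(R^*)=0$, and the optimal coverage levels are exactly those $R$ for which $V(S_G,\pi_R^*,0)=V(S_G,\pi_0^*,0)$.

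I would then turn this equality into a statement about policies. Since $\pi_0^*$ is the unique optimal policy at $R=0$ (Theorem \ref{the:UserUniqueness}), it is the unique minimizer of $V(S_G,\cdot,0)$, so $\tau(R)=0$ holds precisely when $\pi_R^*=\pi_0^*$. Feeding in Proposition \ref{pro:UserSwitchingCoverageLevel} case by case then yields the ranges directly: in Case 1 one has $\pi_R^*=\Pi_{LL}$ for every $R$, giving $R^*\in[0,1]$; in Cases 2, 3, 4(a), 4(b), and 4(c) the optimal policy equals $\pi_0^*$ on the first sub-interval up to the relevant switching threshold $R_G$, $R_B$, or $R_s$, which produces the stated intervals and simultaneously fixes the pair $(\alpha_G,\alpha_B)=\pi_0^*$ appearing in the premium.

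The premium would come from (\ref{eq:InsurerTmax}): $K^*=K_{\max}=V(S_G,\pi_0^*,0)-V(S_G,\pi_R^*,R)$. On the optimal range $\pi_R^*=\pi_0^*$, so $K^*=V(S_G,\pi_0^*,0)-V(S_G,\pi_0^*,R)$. Under the linear coverage $r(x)=Rx$ the loss function splits as $l(s,a,R)=(1-R)X_s+c(a)$, so substituting into the closed form (\ref{eq:UserWidehatJSG}) shows that $R$ enters only through the direct-loss terms, each scaled by $(1-R)$. Subtracting the $R$-value from the $0$-value cancels the protection-cost contributions and leaves $K^*=R\cdot k(S_G,\alpha_G;\alpha_B)$, where $k$ is exactly the numerator of (\ref{eq:UserWidehatJSG}) with each $l(S_n,\cdot,R)$ replaced by the bare direct loss $X_n$, i.e. expression (\ref{eq:UserKSG}). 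Matching $(\alpha_G,\alpha_B)$ to the case-specific $\pi_0^*$ then reproduces the six premium formulas, and $\tau(R^*)=0$ has already been established.

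The step I expect to be most delicate is the treatment of the switching endpoints $R_G$, $R_B$, and $R_s$, which the statement includes in the closed intervals. At each threshold the user is indifferent between the competing policies at coverage level $R$, but the tie-breaking convention (weak protection on ties) makes $\pi_R^*$ jump to the weaker policy, whose no-insurance value strictly exceeds $V(S_G,\pi_0^*,0)$; taken literally this drives $\tau$ strictly negative exactly at the boundary. Closing the interval therefore requires a brief indifference argument: at the threshold the user is willing to retain the original policy $\pi_0^*$ because it is value-equivalent under coverage $R$, so the insurer can still induce it and collect $K^*$ at zero profit. Apart from this boundary bookkeeping, the remaining work is the routine algebra confirming that (\ref{eq:UserKSG}) is the zero-cost, $R=0$ specialization of the numerator of (\ref{eq:UserWidehatJSG}).
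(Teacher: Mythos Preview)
Your proposal is correct and follows essentially the same route as the paper: invoke Proposition~\ref{pro:InsurerOptimalGeneral} to pin the optimal profit at zero, read off from Proposition~\ref{pro:UserSwitchingCoverageLevel} the coverage levels for which $\pi_R^*=\pi_0^*$, and compute the premium from (\ref{eq:InsurerTmax}) via the decomposition $\overline{V}(s,\alpha_s;\alpha_{s^c},R)=(1-R)k(s,\alpha_s;\alpha_{s^c})+b(s,\alpha_s;\alpha_{s^c})$ established in Appendix~A. Your handling of the threshold endpoints is in fact more careful than the paper's, which simply includes them in the closed intervals without comment.
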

\begin{proof}
We can obtain the optimal insurance contract for all cases using the results from Proposition \ref{pro:InsurerOptimalGeneral}. In Case 1, any coverage level $R^*\in [0, 1]$ is optimal, and the associated premium $K^* =  V(S_G,\Pi_{LL},0) -  V(S_G,\Pi_{LL},R) = \overline{V}(S_G,A_L;A_L,0) -   \overline{V}(S_G,A_L;A_L,R) = R^*k(S_G, A_L; A_L)$, where $k(S_G, A_L; A_L)$ comes from (\ref{eq:UserK}) in Appendix A. Similarly, we can obtain the optimal insurance contracts in Cases 2, 3, and 4. 
\end{proof}
We can see from the optimal insurance contracts that the premium is linear on the coverage level, which can be summarized as the {linear insurance contract principle}. Moreover, all the optimal insurance contracts lead to a zero operating profit for the insurer, which indicates a {zero-operating profit principle}. The optimal insurance contracts usually provide limited coverage levels. When the coverage level is high, the user tends to act recklessly, which induces high risks and high direct losses of the user as shown in Theorem \ref{the:UserPeltzmanEffect}, and the insurer is required to cover the extra losses caused by that, which induces a negative profit of him. As a result, the insurer chooses not to provide the insurance to the user in that case. 
\section{Numerical Examples}
\label{sec:Num}
In this section, we first present numerical experiments on a two-state two-action user and a linear coverage insurer to verify our previous analytical results. We then present numerical experiments on a four-state three-action user with a linear coverage insurer and a threshold coverage insurer. 
\subsection{Two-State Two-Action User and Linear Coverage Insurer}
In this subsection, we aim to verify our analysis on the two-state two-action user and the linear coverage insurer with numerical experiments. We assume that the user has $\delta = 0.9$, $X_G = 0$, $X_B = 10$, $C_L = 0$, $C_H = 1$, $p(S_G,A_L,S_B) =p(S_G,A_L,S_G)= 0.5$, $p(S_B,A_L,S_G) =p(S_B,A_L,S_B)= 0.5$, $p(S_G,A_H,S_B) =1 - p(S_G,A_H,S_G)= 0.2$, and $p(S_B,A_H,S_G) =1- p(S_B,A_H,S_B)= 0.6$. We can achieve that $\rho= -0.20$, $h(S_G,A_H,0) = -1.88$, $h(S_G, A_L, 0) = -1.70$, $h(S_B,A_H, 0) = -0.08$, and $h(S_B, A_L, 0) = 0.10$. Thus, the user's optimal protection policies can be described as the Case 4(a) in Proposition \ref{pro:UserSwitchingCoverageLevel}. The optimal insurance contract $\{K^*,R^*\} $ has $R^*\in [0,R_B]$ and $K^*=R^*k(S_G,A_H;A_H)$, where $R_B=0.0889$ and $k(S_G,A_H;A_H)=21.9512$ from Proposition \ref{pro:InsurerOptimal}.

With the dynamic programming approach or linear programming approach in Section \ref{sec:User}, we can compute the optimal protection policies and the expected cumulative effective losses of the user as shown in Figs.\ref{fig:S2A2Linear}(a,b). We can further calculate the premium and the operating profit of the insurer as shown in Figs.\ref{fig:S2A2Linear}(c,d). We can see that the numerical results coincide with our analytical results. 

\begin{figure}[http]
\centering
\subfigure[]{
\includegraphics[width=0.44\textwidth]{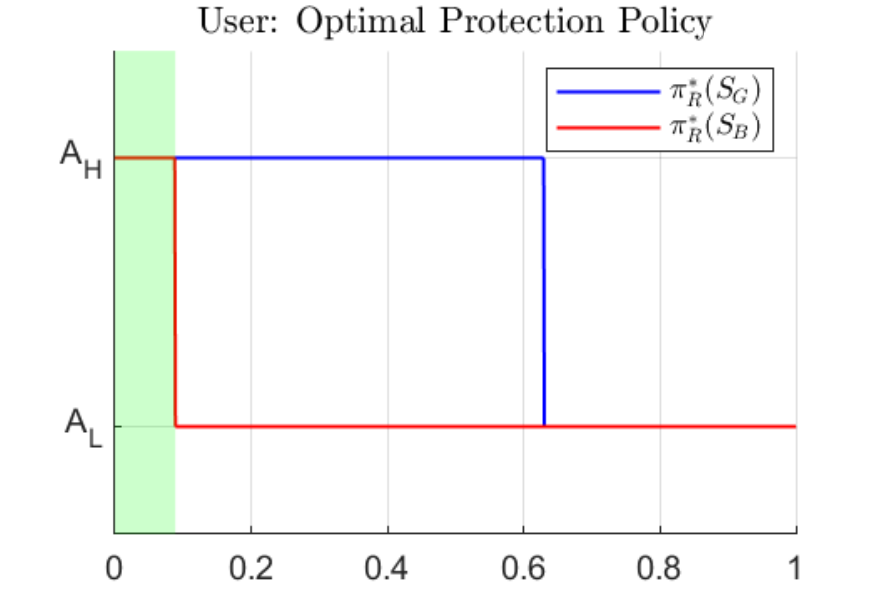}}
\subfigure[]{
\includegraphics[width=0.44\textwidth]{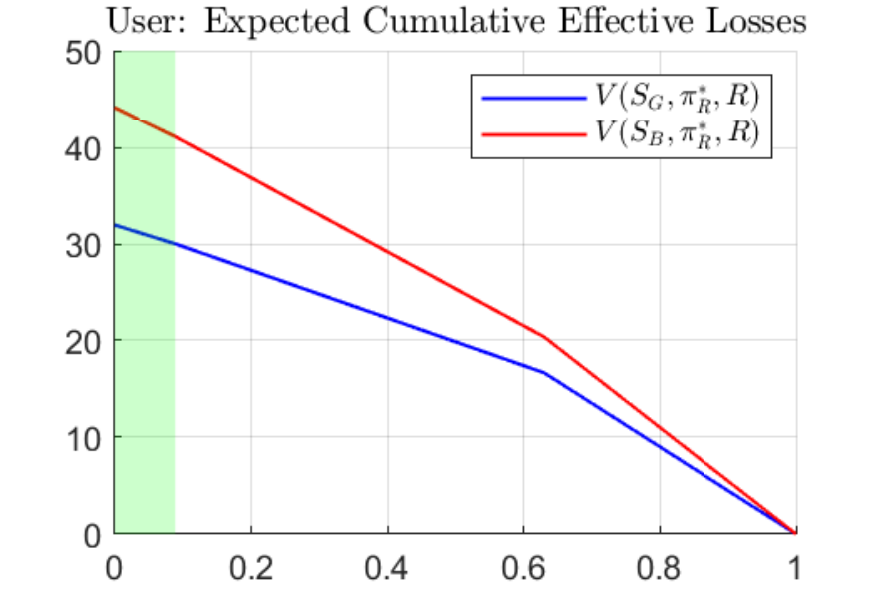}}
\subfigure[]{
\includegraphics[width=0.44\textwidth]{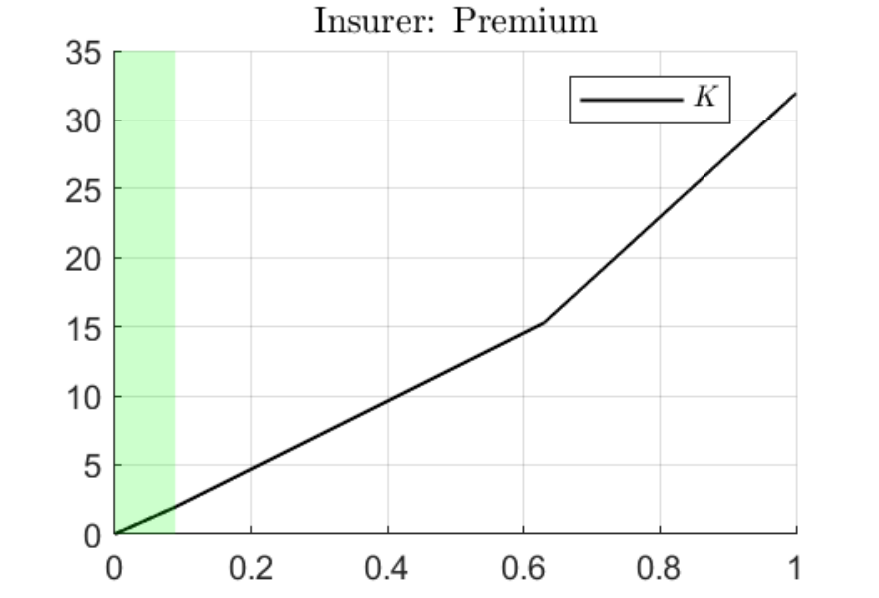}}
\subfigure[]{
\includegraphics[width=0.44\textwidth]{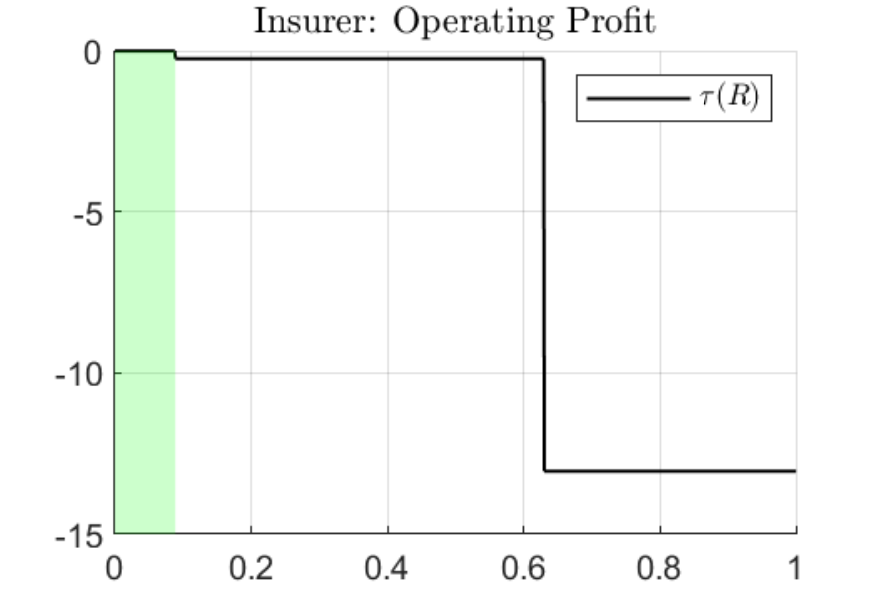}}
\caption{Two-State Two-Action User and Threshold Coverage Insurer. The horizontal axis in Figs. (a), (b), (c), (d) represents the coverage level $R$. The green area denotes the region of optimal insurance contracts.}
\label{fig:S2A2Linear}
\end{figure}

\subsection{Four-State Three-Action User and Linear Coverage Insurer}
In this subsection, we consider a more complicated example where the user has four states and three actions, and the insurer provides linear coverage. We show that our model can be used to analyze the interactions between the user and the insurer in a numerical way.  

We assume that the user's states can be identified as $S_G$, $S_{B,1}$, $S_{B,2}$, and $S_{B,3}$ with the state losses $X_G = 0$, $X_{B,1} = 4$, $X_{B,2}=8$, and $X_{B,3}=16$, respectively. $S_G$ indicates the good state, while $S_{B,i}$ indicates the bad states with $i$ capturing the level of the damage. The user can take no protection $A_0$, weak protection $A_L$, or strong protection $A_H$, and the costs of them can be identified as $c(A_0) = 0$, $c(A_L)=0.3$, and $c(A_H) = 0.6$, respectively. Different actions have different impacts on the transition probabilities. For convenience, we summarize the transition probabilities by the following matrix:
\begin{equation}
\label{eq:NEP}
P_{a} = 
 \left[  \begin{array}{cccc}
{p(S_G,a,S_G)}&{p(S_G,a,S_{B,1})}&{p(S_G,a,S_{B,2})}&{p(S_G,a,S_{B,3})} \\ 
{p(S_{B,1},a,S_G)}&{p(S_{B,1},a,S_{B,1})}&{p(S_{B,1},a,S_{B,2})}&{p(S_{B,1},a,S_{B,3})} \\ 
{p(S_{B,2},a,S_G)}&{p(S_{B,2},a,S_{B,1})}&{p(S_{B,2},a,S_{B,2})}&{p(S_{B,2},a,S_{B,3})} \\ 
{p(S_{B,3},a,S_G)}&{p(S_{B,3},a,S_{B,1})}&{p(S_{B,3},a,S_{B,2})}&{p(S_{B,3},a,S_{B,3})} \\ 
\end{array}  \right]
\end{equation}
The user has a larger probability of going to the good state and a smaller probability of going to the bad state with better protections. We then take the following transition probabilities in this example:
\[ P_{A_0} = \left[  \begin{array}{cccc}
{0.25}&{0.25}&{0.25}&{0.25} \\ 
{0.25}&{0.25}&{0.25}&{0.25} \\
{0.25}&{0.25}&{0.25}&{0.25} \\
{0.25}&{0.25}&{0.25}&{0.25} \\
\end{array}  \right]; \]
\[ P_{A_L} = \left[  \begin{array}{cccc}
{0.4}&{0.3}&{0.2}&{0.1} \\ 
{0.4}&{0.3}&{0.2}&{0.1} \\ 
{0.4}&{0.3}&{0.2}&{0.1} \\ 
{0.4}&{0.3}&{0.2}&{0.1} \\ 
\end{array}  \right]; \]
\[ P_{A_H} = \left[  \begin{array}{cccc}
{0.8}&{0.2}&{0.0}&{0.0} \\ 
{0.7}&{0.2}&{0.1}&{0.0} \\
{0.6}&{0.2}&{0.1}&{0.1} \\
{0.5}&{0.2}&{0.2}&{0.1} \\
\end{array}  \right]. \]
Let $\delta = 0.9$, the optimal protection policies and the expected cumulative effective losses of the user are shown in Figs.\ref{fig:S4A3Linear}(a,b). We can see from (a,b) that the user decreases his protections with the increase of the coverage level, and the user also has lower expected cumulative effective losses with higher coverage levels. The premium and the operating profit of the insurer are shown in Figs.\ref{fig:S4A3Linear}(c,d). We can see that with the increase of the coverage level, the premium is linearly increasing. Moreover, the maximum operating profit that can be achieved by the insurer is $0$. We can also observe that the optimal insurance contract tends to provide limited coverage levels, and higher coverage levels can lead to negative operating profits of the insurer. Thus, the risk compensation, the zero-operating profit principle, and the linear insurance contract principle still hold in this example. 
\begin{figure}[http]
\centering
\subfigure[]{
\includegraphics[width=0.44\textwidth]{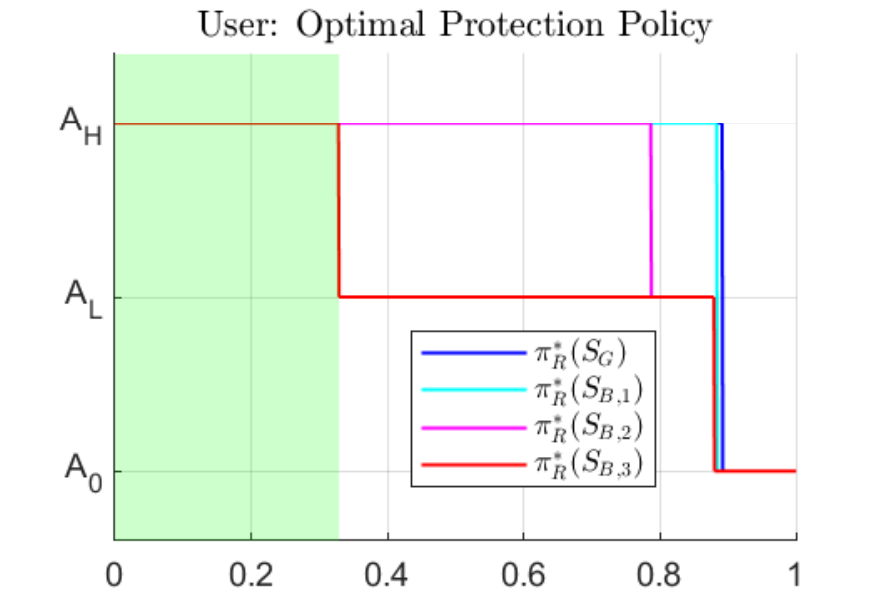}}
\subfigure[]{
\includegraphics[width=0.44\textwidth]{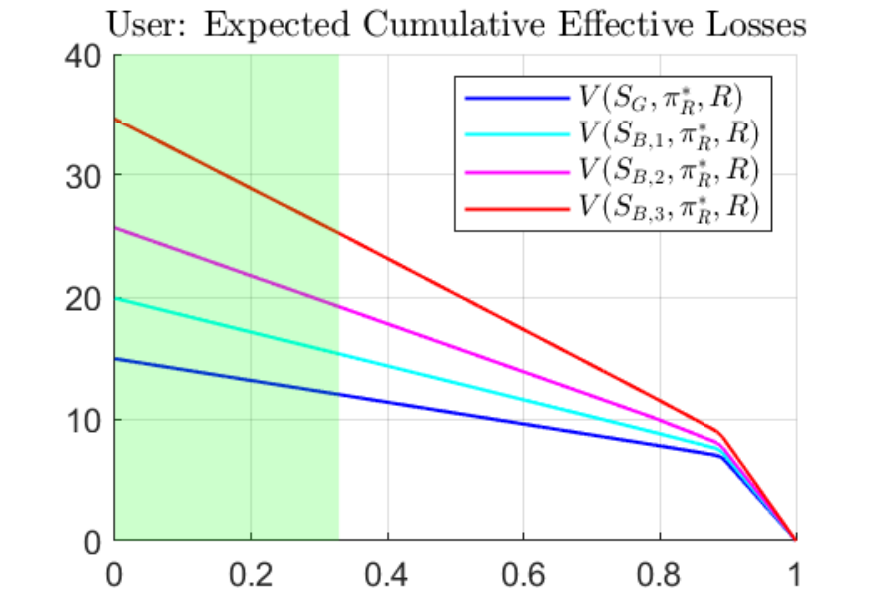}}
\subfigure[]{
\includegraphics[width=0.44\textwidth]{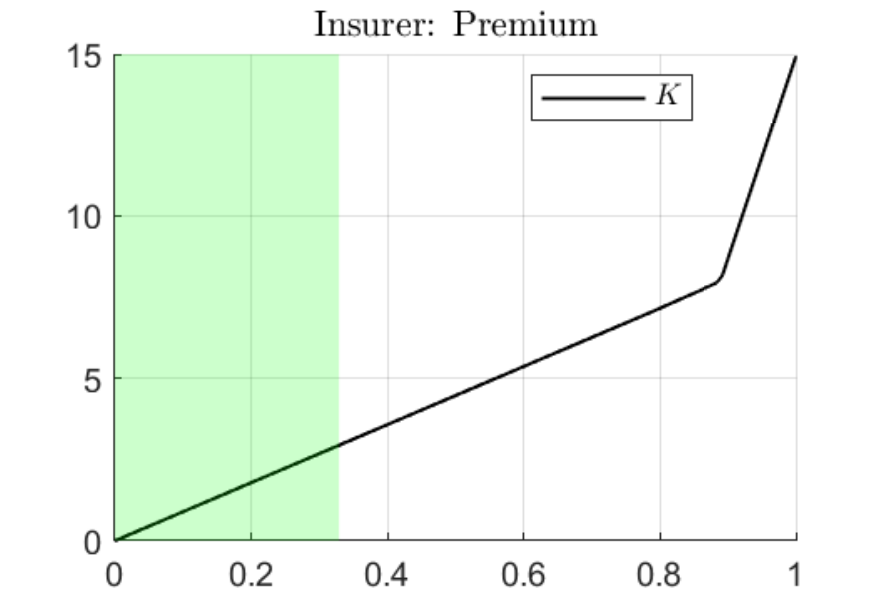}}
\subfigure[]{
\includegraphics[width=0.44\textwidth]{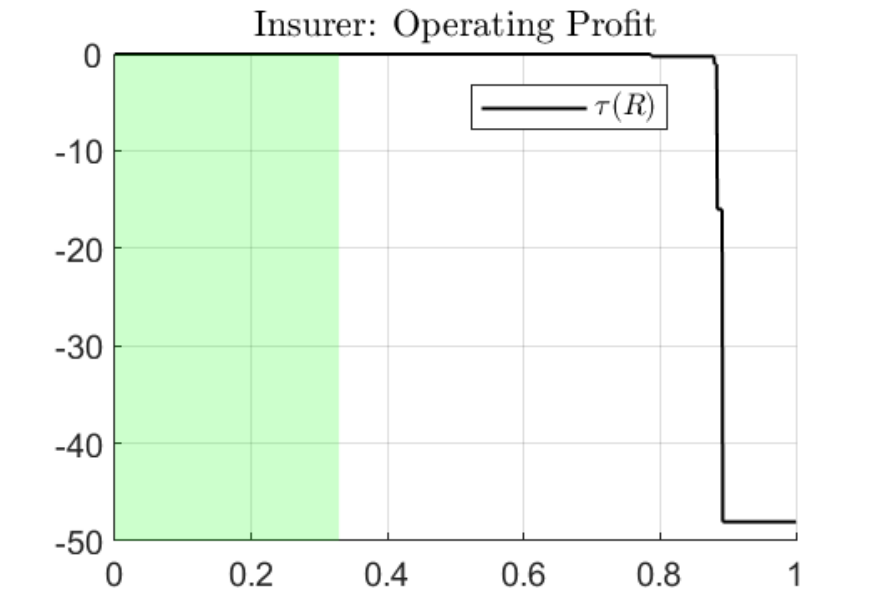}}
\caption{Four-State Three-Action User and Linear Coverage Insurer. The horizontal axis in Figs. (a), (b), (c), (d) represents the coverage level $R$. The green area denotes the region of optimal insurance contracts.}
\label{fig:S4A3Linear}
\end{figure}

\subsection{Four-State Three-Action User and Threshold Coverage Insurer}
In this subsection, we consider a threshold coverage insurance and show its impact on the four-state three-action user and the insurer. 

We use the same settings for the user as in the previous subsection. The threshold insurance contract has two coverage levels $R_0 = 0$ and $R_1 = 0.9$, which are distinguished by a threshold $X_R \in [0, 20]$. When the loss of the user $x \leq X_R$, the insurer provides no coverage $R_0x$, otherwise, the insurer provides a coverage $R_1x$. A lower $X_R$ indicates that the insurance has a higher coverage for smaller losses. The objective of the insurer is to maximize his operating profit by finding the optimal threshold $X_R^*$ and the associated premium $K^*$. 

The optimal protection policies and the expected cumulative effective losses of the user are shown in Fig.\ref{fig:S4A3Jump}(a,b), and we can see from them that the user decreases his protections with the decrease of the threshold $X_R$, which indicates that the user tends to act recklessly knowing that the insurer provides a high coverage even he has a small loss from cyber risks. Moreover, we can see that the premium is a staircase function on the threshold $X_R$, and it decreases with the increase of $X_R$, which shows that the insurer charges a higher premium to provide a higher coverage level. The maximum operating profit that can be achieved by the insurer is $0$. As a result, this example shows the similar risk compensation and zero-operating profit principle as in the previous examples. Note that the gray area has $X_R > X_{B, 3}$, i.e., the insurer provides no coverage for the user at any states, which is equivalent to the case when there is no insurance. 

\begin{figure}[http]
\centering
\subfigure[]{
\includegraphics[width=0.44\textwidth]{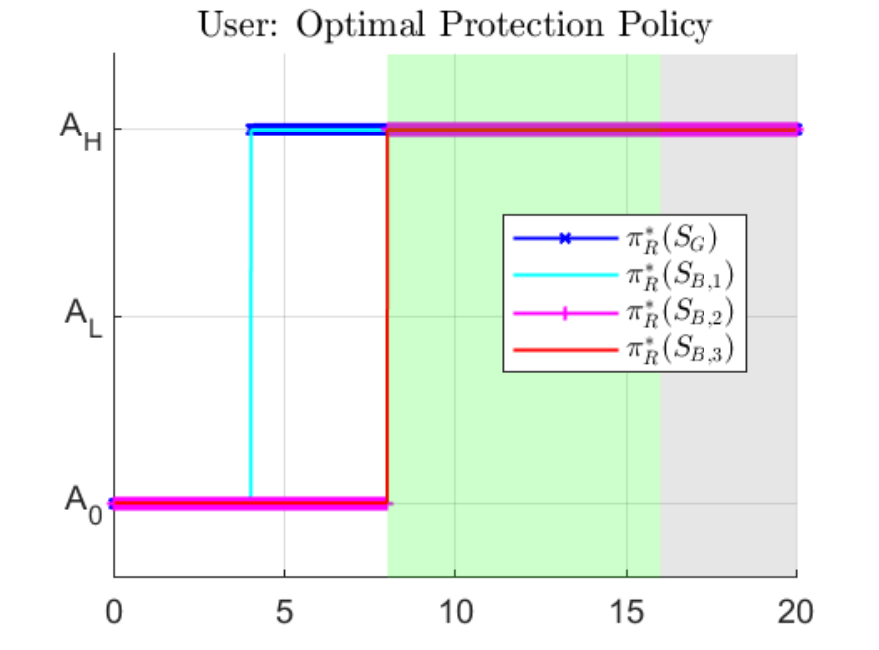}}
\subfigure[]{
\includegraphics[width=0.44\textwidth]{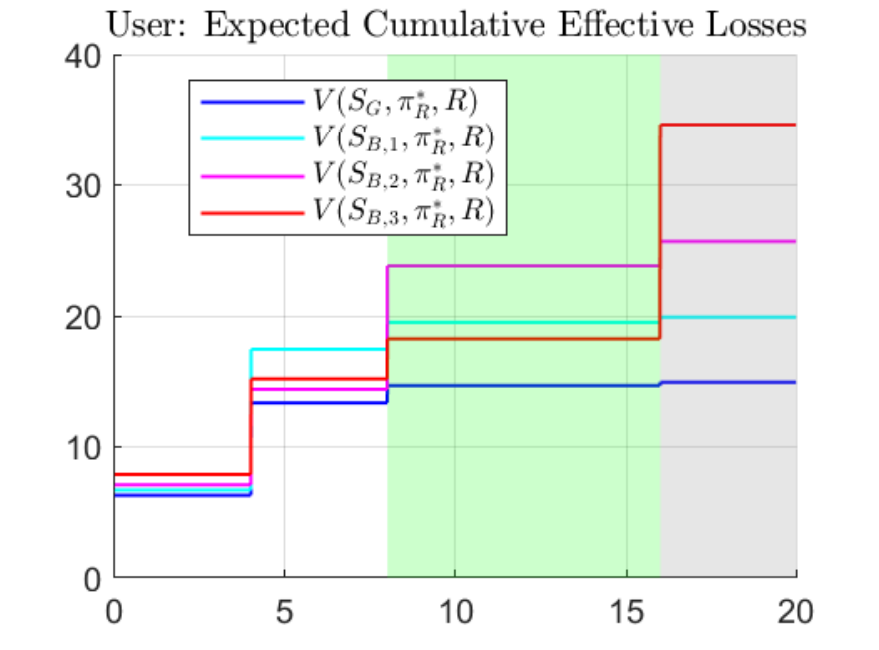}}
\subfigure[]{
\includegraphics[width=0.44\textwidth]{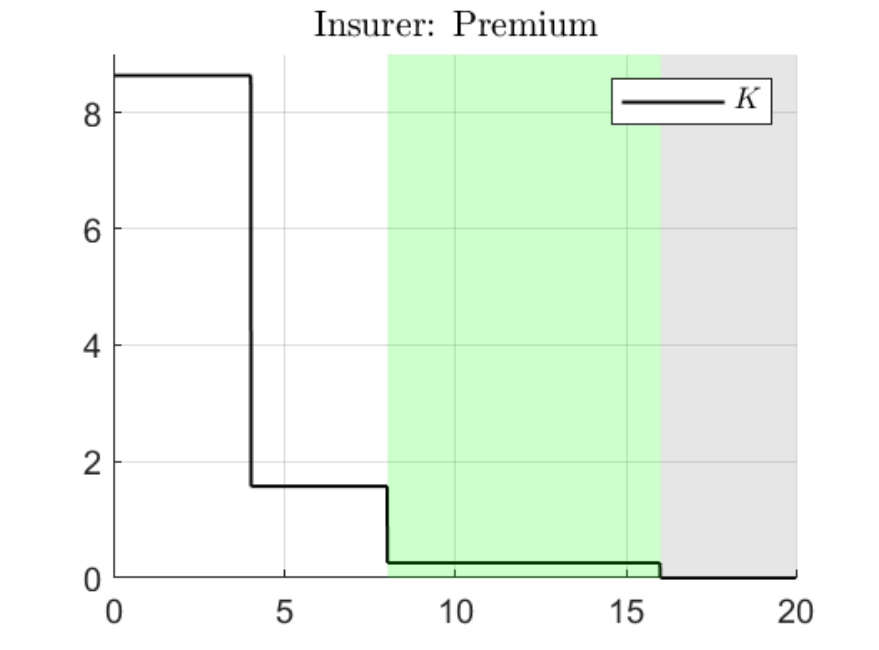}}
\subfigure[]{
\includegraphics[width=0.44\textwidth]{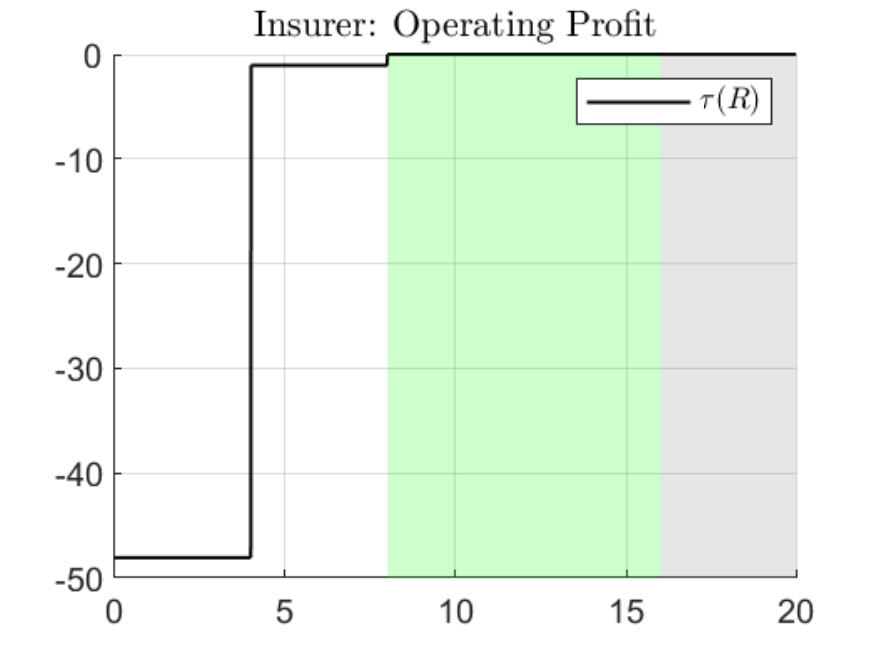}}
\caption{Four-State Three-Action User and Threshold Coverage Insurer. The horizontal axis in Figs. (a), (b), (c), (d) represents the threshold $X_R$. The green area denotes the region of optimal insurance contracts. }
\label{fig:S4A3Jump}
\end{figure}

\section{Conclusion}
\label{sec:Con}
In this paper, we have presented a dynamic moral-hazard type of principal-agent model to study the cyber-insurance and its impacts on the cyber-security. The dynamics and correlations of the cyber risks have been modeled by Markov decision processes where the user aims to find the optimal protection policy to mitigate the impacts of cyberattacks. Both the computational and analytical tools have been presented to design the optimal cyber-insurance contracts. We have studied and fully analyzed a case where the user has two states and two actions, and the insurer provides linear coverage insurance. We have further demonstrated the Peltzman effect that the user has higher cyber risks under insurance due to risk compensation, i.e., the user tends to act more recklessly knowing he is protected. We have presented the linear insurance contract principle and the zero-operating profit principle of the optimal cyber-insurance contract. Numerical experiments have been used to corroborate our results and further demonstrate the case study with a four-state three-action user and his interactions with linear coverage insurance and threshold coverage insurance. The risk compensation and the zero-operating profit principle have been shown to hold in these cases. One direction of future research is the investigation of cyber-insurance contracts over complex networks such as scale-free and small-world networks with dynamic cyber risks.

\section*{Appendix A. Proof of Proposition \ref{pro:UserH}}
To simplify the notation in this proof, we define the discounted transition probabilities as 
\[\widehat{p}(s,\alpha_s,s) = 1 - \delta  p(s,\alpha_s,s), \]
\[\widehat{p}(s,\alpha_s,s^c) = \delta  p(s,\alpha_s,s^c).\]
\begin{remark}
\label{rem:UserDiscountedPro}
The following facts hold for $\widehat{p}$:
\begin{itemize}
\item[(i)] $\widehat{p}(S_G,\alpha_{G},S_G)  - \widehat{p}(S_G,\alpha_{G},S_B) = \widehat{p}(S_B,\alpha_{B},S_B) -\widehat{p}(S_B,\alpha_{B},S_G) = 1-\delta $;
\item[(ii)] If $\delta = 1$, we have  $\widehat{p}(S_G,\alpha_{G},S_G) = \widehat{p}(S_G,\alpha_{G},S_B)$ and $\widehat{p}(S_B,\alpha_{B},S_B) = \widehat{p}(S_B,\alpha_{B},S_G)$.
\end{itemize}
\end{remark}
Thus, (\ref{eq:UserIp}) can be written as
\[\begin{array}{l}
I_p(\alpha_{G},\alpha_{B}) = \widehat{p}(S_G,\alpha_{G},S_G)\widehat{p}(S_B,\alpha_{B},S_B) - \widehat{p}(S_G,\alpha_{G},S_B) \widehat{p}(S_B,\alpha_{B},S_G). 
\end{array} \]
\begin{remark}
\label{rem:UserEigenvalue}
The following facts hold for $I_p$:
\begin{itemize}
\item[(i)] $I_p(\alpha_{G},\alpha_{B})  = (1-\delta +\widehat{p}(S_G,\alpha_{G},S_B) )(1-\delta + \widehat{p}(S_B,\alpha_{B},S_G))    - \widehat{p}(S_G,\alpha_{G},S_B) \widehat{p}(S_B,\alpha_{B},S_G)   = (1-\delta)^2+(1-\delta) (    \widehat{p}(S_G,\alpha_{G},S_B) +  \widehat{p}(S_B,\alpha_{B},S_G))  > 0$ when $0 \leq \delta < 1$;
\item[(ii)] $I_p(A_H,\alpha_{B}) - I_p(A_L,\alpha_{B}) =  (1-\delta)(  \widehat{p}(S_G,A_H,S_B) -   \widehat{p}(S_G,A_L,S_B) ) $;
\item[(iii)] $I_p(\alpha_{G},A_H) -  I_p(\alpha_{G},A_L)  = (1-\delta) (    \widehat{p}(S_B,A_H,S_G)- \widehat{p}(S_B,A_L,S_G))  $.
\end{itemize}
\end{remark}
The action dependent expected cumulative effective losses (\ref{eq:UserWidehatJSG}) and (\ref{eq:UserWidehatJSB}) can be rewritten as follows:
\begin{equation}
\label{eq:UserWidehatJRe}
\overline{V}(s,\alpha_{s};\alpha_{s^c},R) =  (1-R) k(s,\alpha_{s};\alpha_{s^c})  +  b(s,\alpha_{s};\alpha_{s^c}),
\end{equation}
where
\begin{equation}
\label{eq:UserK}
\begin{array}{c}
k(s,\alpha_{s};\alpha_{s^c}) =   \frac{  \widehat{p}(S_B,\alpha_{B},  s^c) X_G  +    \widehat{p}(S_G,\alpha_{G},s^c)   X_B }{ I_p(\alpha_{G},\alpha_{B})   } ;
\end{array}
\end{equation}
\begin{equation}
\label{eq:UserB}
\begin{array}{c}
b(s,\alpha_{G};\alpha_{s^c})  =   \frac{ \widehat{p}(S_B,\alpha_{B},  s^c) c(\alpha_{G})  +    \widehat{p}(S_G,\alpha_{G}, s^c)   c(\alpha_{B}) }{I_p(\alpha_{G},\alpha_{B}) }.
\end{array}
\end{equation}
Note that
\begin{equation}
\label{eq:UserVKSG}
\begin{array}{l}
k(S_G,A_H;\alpha_B) - k(S_G,A_L;\alpha_B) \\ = \frac{ \widehat{p}(S_B,\alpha_{B},S_B) X_G +  \widehat{p}(S_G,A_H,S_B) X_B }{I_p(A_H,\alpha_{B})}  - \frac{ \widehat{p}(S_B,\alpha_{B},S_B) X_G +  \widehat{p}(S_G,A_L,S_B) X_B }{I_p(A_L,\alpha_{B})} 

\\ = \frac{ \widehat{p}(S_B,\alpha_{B},S_B) I_p(A_L,\alpha_{B}) X_G +  \widehat{p}(S_G,A_H,S_B)I_p(A_L,\alpha_{B})  X_B   }{I_p(A_H,\alpha_{B}) I_p(A_L,\alpha_{B})}
  - \frac{   \widehat{p}(S_B,\alpha_{B},S_B) I_p(A_H,\alpha_{B}) X_G + \widehat{p}(S_G,A_L,S_B) I_p(A_H,\alpha_{B}) X_B            }{I_p(A_H,\alpha_{B}) I_p(A_L,\alpha_{B})}

\\ = \frac{ \widehat{p}(S_B,\alpha_{B},S_B) ( I_p(A_L,\alpha_{B}) - I_p(A_H,\alpha_{B})   )X_G  }{I_p(A_H,\alpha_{B}) I_p(A_L,\alpha_{B})}
 + \frac{ \widehat{p}(S_G,A_H,S_B)I_p(A_L,\alpha_{B})  X_B  -  \widehat{p}(S_G,A_L,S_B) I_p(A_H,\alpha_{B})  X_B            }{I_p(A_H,\alpha_{B}) I_p(A_L,\alpha_{B})} 

\\ = \frac{ (1-\delta) \widehat{p}(S_B,\alpha_{B},S_B) (\widehat{p}(S_G,A_L,S_B) - \widehat{p}(S_G,A_H,S_B)  )  X_G}{I_p(A_H,\alpha_{B}) I_p(A_L,\alpha_{B})}
 + \frac{  (1-\delta) \widehat{p}(S_B,\alpha_{B},S_B) \left(  \widehat{p}(S_G,A_H,S_B) - \widehat{p}(S_G,A_L,S_B)   \right)   X_B }{I_p(A_H,\alpha_{B}) I_p(A_L,\alpha_{B})}

\\ = \frac{ (1-\delta) \widehat{p}(S_B,\alpha_{B},S_B) \left(\widehat{p}(S_G,A_L,S_B) - \widehat{p}(S_G,A_H,S_B)  \right) \left(  X_G  - X_B \right)}{I_p(A_H,\alpha_{B}) I_p(A_L,\alpha_{B})}, 
\end{array}
\end{equation}
where the fourth equality is achieved by plugging Remark \ref{rem:UserEigenvalue}(i)(ii). Similarly, we can achieve that
 \begin{equation}
\label{eq:UserVKSB}
\begin{array}{l}
k(S_B,A_H;\alpha_{G}) - k(S_B,A_L;\alpha_{G})
 = \frac{(1-\delta)\widehat{p}(S_G,\alpha_{G},S_G)\left(   \widehat{p}(S_B,A_H,S_G) - \widehat{p}(S_B,A_L,S_G)   \right)  \left(  X_G  - X_B \right)}{I_p(\alpha_{G},A_H) I_p(\alpha_{G},A_L)  };
\end{array}
\end{equation}
\begin{equation}
\label{eq:UserVBSG}
\begin{array}{l}
b(S_G,A_H;\alpha_B) - b(S_G,A_L;\alpha_{B})
 =\frac{(1-\delta)\widehat{p}(S_B,\alpha_B,S_B) \left( \widehat{p}(S_B,\alpha_B,S_B) + \widehat{p}(S_G,\alpha_B,S_B)   \right)(C_H-C_L)}{I_p(A_H,\alpha_B)I_p(A_L,\alpha_B)};
\end{array}
\end{equation}
\begin{equation}
\label{eq:UserVBSB}
\begin{array}{l}
b(S_B,A_H;\alpha_{G}) - b(S_B,A_L;\alpha_{G}) 
 =  \frac{ (1-\delta)\widehat{p}(S_G,\alpha_{G},S_G) \left( \widehat{p}(S_G,\alpha_{G},S_G) + \widehat{p}(S_B,\alpha_{G},S_G)   \right)(C_H-C_L)}{I_p(\alpha_{G},A_H)I_p(\alpha_{G},A_L)}.
\end{array}
\end{equation}

As a result, we have
\begin{equation}
\label{eq:UserOVDiffG}
\begin{array}{l}
\overline{V}(S_G,A_H;\alpha_{B},R)  - \overline{V}(S_G,A_L;\alpha_{B},R)
\\ = (1-R) \left( k(S_G,A_H;\alpha_{B})  - k(S_G,A_L;\alpha_{B})    \right)   + b(S_G,A_H;\alpha_{B})  - b(S_G,A_L;\alpha_{B})
\\ = \frac{(1-\delta)\widehat{p}(S_B,\alpha_{B},S_B) }{I_p(A_H,\alpha_{B}) I_p(A_L,\alpha_{B})}  h(S_G,\alpha_B,R);
\end{array}
\end{equation}
\begin{equation}
\label{eq:UserOVDiffB}
\begin{array}{l}
\overline{V}(S_B,A_H;\alpha_G,R) -\overline{V}(S_B,A_L;\alpha_G,R)   = \frac{(1-\delta)\widehat{p}(S_G,\alpha_{G},S_G) }{I_p(\alpha_G,A_H) I_p(\alpha_G,A_L)}  
h(S_B,\alpha_G,R),
\end{array}
\end{equation}
where $h(s,\alpha_{s^c}, R)$ has been defined in Proposition \ref{pro:UserH}. Since $1-\delta > 0$, $\widehat{p}(s, \alpha_s, s) > 0$, and $I_p(\alpha_s, \alpha_{s^c}) > 0$, we have that $\overline{V}(s,A_H;\alpha_{s^c},R) < \overline{V}(s,A_L;\alpha_{s^c},R)$ if $h(s,\alpha_{s^c},R)<0$ and $\overline{V}(s,A_H;\alpha_{s^c},R) \geq  \overline{V}(s,A_L;\alpha_{s^c},R)$ if $h(s,\alpha_{s^c},R)\geq 0$. Proposition \ref{pro:UserH} holds. 

\section*{Appendix B. Proof of Theorem \ref{the:UserUniqueness}}
Recall the value of transition probabilities $\rho$ from (\ref{eq:UserValueTransitionProbabilities1}) in Proposition \ref{pro:UserSwitchingCoverageLevel}. Besides Proposition \ref{pro:UserHMonotonicity}, we note that $h(s,\alpha_{s^c},R)$ has the following extra facts.
\begin{equation}
\label{eq:AppInvariability}
\begin{array}{l}
h(S_G,A_H,R) - h(S_G,A_L,R) =  h(S_B,A_H,R) - h(S_B,A_L,R)  =\rho \delta( C_H-C_L );
\end{array}
\end{equation}
\begin{equation}
\label{eq:AppMonotonicityI}
\begin{array}{l}
h(S_G,A_H,R) -  h(S_B,A_H,R)  =  h(S_G,A_L,R) -  h(S_B,A_L,R)   = \rho \delta (1-R)  ( X_B - X_G  ) ;
\end{array}
\end{equation}
\begin{equation}
\label{eq:AppMonotonicityII}
\begin{array}{l}
h(S_G,A_H,R) -  h(S_B,A_L,R) 
  = \rho\delta \left(  ( C_H - C_L  )  +    (1-R) ( X_B - X_G  )     \right)   ;
\end{array} 
\end{equation}
\begin{equation}
\label{eq:AppVariability}
\begin{array}{l}
h(S_B,A_H,R) -  h(S_G,A_L,R) =  \rho \delta \left(   ( C_H - C_L  )  -    (1-R) ( X_B - X_G  )     \right)  .
\end{array}  
\end{equation}

If $\pi_R^* = \Pi_{HL}$, we have $h(S_G,A_L,R) < 0$ and $h(S_B,A_H,R) \geq 0$ from Proposition \ref{pro:UserH}. Thus, $\pi_R^* \neq \Pi_{LL}$ and $\pi_R^* \neq \Pi_{HH}$. Similarly, if $\pi_R^* = \Pi_{LH}$, we have $\pi_R^* \neq \Pi_{LL}$ and $\pi_R^* \neq \Pi_{HH}$; if $\pi_R^* = \Pi_{LL}$, we have $\pi_R^* \neq \Pi_{LH}$ and $\pi_R^* \neq \Pi_{HL}$; if $\pi_R^* = \Pi_{HH}$, we have $\pi_R^* \neq \Pi_{LH}$ and $\pi_R^* \neq \Pi_{HL}$. Thus, to prove the uniqueness of $\pi_R^*$, we only need to prove that (i). $\pi_R^* = \Pi_{LH}$ and $\pi_R^* = \Pi_{HL}$ cannot exist at the same time and (ii). $\pi_R^* = \Pi_{LL}$ and $\pi_R^* = \Pi_{HH}$ cannot exist at the same time. 

If $\pi_R^* = \Pi_{LH}$ and $\pi_R^* = \Pi_{HL}$ at the same time, we have $h(S_G,A_H,R) \geq 0$, $h(S_B,A_L, R) < 0$, $h(S_G,A_L,R) < 0$, and $h(S_B,A_H,R) \geq 0$ from Proposition \ref{pro:UserH}, which indicates that $\rho > 0$ from (\ref{eq:AppMonotonicityII}) as $h(S_G,A_H,R)> h(S_B,A_L, R)$ and $\rho \delta \left(   ( C_H - C_L  )  -    (1-R) ( X_B - X_G  )     \right)  > 0$ from (\ref{eq:AppVariability}) as $h(S_G,A_L,R) < h(S_B,A_H,R)$. Thus, we can achieve that $( C_H - C_L ) > (1-R) ( X_B - X_G )$. However, 
\begin{equation}
\label{eq:AppHGLR}
\begin{array}{l}
h(S_G,A_L, R)  \\= (1-R)\delta  ( p(S_G,A_H,S_B)   -   p(S_G,A_L,S_B) ) ( X_B - X_G) \\ \ \ \ \ \ \ \ \ \ \ \ \ \ \ \ \     +   (1-\delta + \delta p(S_B,A_L,S_G)   +   \delta p(S_G,A_L,S_B)  )( C_H - C_L  )

\\ >  (1-R) \delta ( p(S_G,A_H,S_B)   -   p(S_G,A_L,S_B) ) (X_B - X_G)  \\ \ \ \ \ \ \ \ \ \ \ \ \ \ \ \ \ +  \left(   1-\delta + \delta p(S_B,A_L,S_G)   +   \delta p(S_G,A_L,S_B)  \right)(1-R)( X_B - X_G)

\\ =  (1-R) ( X_B - X_G) \left(1-\delta +   \delta p(S_G,A_H,S_B) + \delta   p(S_B,A_L,S_G)  \right)

\\ > 0, 
\end{array}
\end{equation}
which violates $h(S_G,A_L, R) < 0$. As a result, $\pi_R^* = \Pi_{LH}$ and $\pi_R^* = \Pi_{HL}$ cannot exist at the same time.

If $\pi_R^* = \Pi_{LL}$ and $\pi_R^* = \Pi_{HH}$ at the same time, we have $h(S_G,A_L,R) \geq 0$, $h(S_B,A_L,R) \geq 0$, $h(S_G,A_H,R) < 0$, and $h(S_B,A_H,R) < 0$, which indicates that $\rho < 0$ from (\ref{eq:AppInvariability}) and $\rho \delta \left(   ( C_H - C_L  )  -    (1-R) ( X_B - X_G  )     \right)  < 0$ from (\ref{eq:AppVariability}). Thus, we can achieve that $( C_H - C_L )  -    (1-R) ( X_B - X_G )   > 0$. However, 
\[\begin{array}{l}
h(S_B,A_H, R)  \\= (1-R)\delta  ( p(S_B,A_H,S_G)   -   p(S_B,A_L,S_G) ) ( X_G - X_B) \\ \ \ \ \ \ \ \ \ \ \ \ \ \ \ \ \       +    (1-\delta + \delta p(S_B,A_H,S_G)   +   \delta p(S_G,A_H,S_B)  )( C_H - C_L  ) 

\\ >  (1-R) \delta ( p(S_B,A_H,S_G)   -   p(S_B,A_L,S_G) ) (X_G - X_B) \\ \ \ \ \ \ \ \ \ \ \ \ \ \ \ \ \ +  \left(   1-\delta + \delta p(S_B,A_H,S_G)   +   \delta p(S_G,A_H,S_B)  \right)(1-R)( X_B - X_G)

\\ =  (1-R)( X_B - X_G) \left(   1-\delta  +   \delta p(S_G,A_H,S_B) + \delta  p(S_B,A_L,S_G)  \right)

\\ > 0,
\end{array}\]
which violates $h(S_B,A_H, R) < 0$. As a result, $\pi_R^*=\Pi_{LL}$ and $\pi_R^*=\Pi_{HH}$ cannot exist at the same time. Thus, Theorem \ref{the:UserUniqueness} holds.

\section*{Appendix C. Proof of Proposition \ref{pro:UserSwitchingCoverageLevel}}
There are only four possible protection policies $\Pi_{LL}$, $\Pi_{HL}$, $\Pi_{LH}$, and $\Pi_{HH}$. Thus, the optimal protection policy $\pi_0^*$ without insurance has only four cases: Case 1, Case 2, Case 3, and Case 4 as presented in Proposition \ref{pro:UserSwitchingCoverageLevel}, which are determined by $h(s, \alpha_{s^c}, 0)$ in Proposition \ref{pro:UserH}. As a result, we only need to prove the trends of $\pi_R^*$ with respect to $R$ in different cases. 

We first note that when $R = 1$, we have $h(S_G,\alpha_{B},1)> 0$ and $h(S_B,\alpha_{G},1)> 0$, which indicates that $\pi_{R=1}^* = \Pi_{LL}$. Moreover, if the user has $\pi_{\widehat{R}}^* = \Pi_{LL}$ for a coverage level of $\widehat{R}\in [0, 1]$, we have $h(S_G,A_L,\widehat{R}) \geq 0$ and $h(S_B,A_L,\widehat{R}) \geq 0$ from Proposition \ref{pro:UserH}. Since $h(s,\alpha_{s^c},R)$ is linearly increasing on $R$ as shown in Proposition \ref{pro:UserHMonotonicity}, we have $h(S_G,A_L,R) \geq 0$ and $h(S_B,A_L,R) \geq 0$ for $R \geq \widehat{R}$, which indicates that $\pi_{R}^* = \Pi_{LL}$ for $R \geq \widehat{R}$. Thus, we can conclude that $\pi_R^* = \Pi_{LL}$ when $R$ is sufficiently large and the user chooses not to change his policy with the increase of $R$ once he achieves $\pi_R^*=\Pi_{LL}$ for all cases. 

To prove Cases 2, 3, and 4, recall the value of transition probabilities $\rho$ from (\ref{eq:UserValueTransitionProbabilities1}). If $\rho < 0$, we have $h(S_G,A_H, R) < h(S_B,A_L, R)$ from (\ref{eq:AppMonotonicityII}). However, when $\pi_R^* = \Pi_{LH}$, we have $h(S_G,A_H,R) \geq 0$ and $h(S_B,A_L, R)    < 0$ from Proposition \ref{pro:UserH}, which violates $h(S_G,A_H, R) < h(S_B,A_L, R)$. Thus, we have $\pi_R^*\neq \Pi_{LH}$ if $\rho < 0$. As a result, Case 2 and Case 4(a) has $\rho < 0$ and the user have $\pi_R^* \neq \Pi_{LH}$ in these cases. Since $h(S_B,A_H,0) \geq 0$ when $\pi_{0}^* = \Pi_{HL}$ in Case 2 and $h(S_B,A_H,R)$ is linearly increasing on $R$, $h(S_B,A_H,R) \geq 0$ for $R \in [0,1]$. Thus, the user has $\pi_R^* \neq \pi_{HH}$ in Case 2 and Case 2 holds. The threshold $R_G$ is achieved by solving $h(S_G, A_L, R) = 0$. Case 4(a) holds from Case 2, and the thresholds $R_G$ and $R_B$ are achieved by solving $h(S_G, A_L, R) = 0$ and $h(S_B, A_H, R) = 0$, respectively. 

If $\rho > 0$ and $\pi_R^* = \Pi_{HL}$, we have $h(S_G,A_L,R) < 0$ and $h(S_B,A_H,R) \geq 0$ from Proposition \ref{pro:UserH} and thus $h(S_B,A_H, R) -  h(S_G,A_L, R) = \rho \delta \left(   ( C_H - C_L  )  -    (1-R) ( X_B - X_G  )     \right)  > 0$ from (\ref{eq:AppVariability}), which indicates that $( C_H - C_L )  -    (1-R) ( X_B - X_G )   > 0$. However, we could obtain that $h(S_G,A_L, R) > 0$ following similar arguments as in (\ref{eq:AppHGLR}), which violates $h(S_G,A_L, R) < 0$. Thus, we have $\pi_R^*\neq \Pi_{HL}$ if $\rho > 0$. As a result, Case 3 and Case 4(b) have $\rho > 0$ and the user has $\pi_R^* \neq \Pi_{HL}$ in these cases. Since $h(S_G,A_H,R) \geq 0$ when $\pi_{0}^* = \Pi_{LH}$ in Case 3 and $h(S_G,A_H,R)$ is linearly increasing on $R$,  $h(S_G,A_H,R) \geq 0$ for $R \in [0,1]$. Thus, the user has $\pi_R^* \neq \pi_{HH}$ in Case 3 and Case 3 holds. The threshold $R_B$ is achieved by solving $h(S_B, A_L, R) = 0$. Case 4(b) holds from Case 3, and the thresholds $R_B$ and $R_G$ are achieved by solving $h(S_B, A_L, R) = 0$ and $h(S_G, A_H, R) = 0$, respectively. 

If $\rho = 0$, we have $h(S_G,A_H, R)= h(S_G,A_L,R)  = h(S_B,A_H, R) = h (S_B,A_L, R)$ from (\ref{eq:AppInvariability})-(\ref{eq:AppVariability}). However, $\Pi_{LH}$ and $\Pi_{HL}$ indicate that $ h(S_G,A_H,R)  \geq  0 >   h (S_B,A_L, R) $  and $h(S_G,A_L,R) <0 \leq h(S_B,A_H, R) $, respectively. Thus, we have $\pi_R^*\neq \Pi_{HH}$ and $\pi_R^*\neq \Pi_{LL}$ if $\rho = 0$. As a result, Case 4(c) has $\rho = 0$ and the user has $\pi_R^* \neq \Pi_{HL}$ and  $\pi_R^* \neq \Pi_{LH}$. Thus, Case 4(c) holds, and the thresholds $R_G$ and $R_B$ are achieved by solving $h(S_G, A_H, R) = 0$ and $h(S_B, A_H, R) = 0$, respectively.

\bibliographystyle{ieeetr}
\bibliography{elsarticle-template.bib}

\end{document}